\newtheorem{theorem}{Theorem}
\newtheorem{corollary}[theorem]{Corollary}
\newtheorem{lemma}[theorem]{Lemma}
\newtheorem{definition}[theorem]{Definition}
\newtheorem{claim}[theorem]{Claim}
\newtheorem{remark}[theorem]{Remark}
\newcommand{\poly}[1]{\mathop{\textrm{poly}}(#1)}
\newcommand{\bra}[1]{\{#1\}}
\newcommand{\setcond}[2]{\left\{#1\: \middle|\: #2\right\}}
\newcommand{\transpose}[1]{#1^{\mathrm{T}}}
\newcommand{\rank}[1]{\mathop{\mathrm{rank}}(#1)}
\newcommand{\Prob}{\mathop{\textrm{Prob}}}
\newcommand{\EXP}{\mathrm{EXP}}
\newcommand{\AC}{\mathrm{AC}}
\newcommand{\field}{\mathbb{F}}
\newcommand{\naturals}{\mathbb{N}}
\newcommand{\union}{\cup}
\newcommand{\intersect}{\cap}
\newcommand{\p}{\mathcal{P}}
\newcommand{\mono}[2]{\mathrm{Mon}_{#1}(#2)}
\newcommand{\fx}{\field\langle X\rangle}
\newcommand{\corank}[1]{\mathop{\mathrm{corank}}(#1)}
\newcommand{\mc}[1]{\mathcal{#1}}
\newcommand{\mydim}[0]{\mathrm{dim}}
\newcommand{\F}{\mathbb{F}}
\newcommand{\tH}{\tilde{H}}
\newcommand{\tA}{\tilde{A}}
\newcommand{\T}[1]{\T{#1}}
\newcommand{\x}{\overline{x}}
\newcommand{\sd}{\mathrm{SizeDepth}}
\long\def\symbolfootnote[#1]#2{\begingroup%
\def\thefootnote{\fnsymbol{footnote}}\footnote[#1]{#2}\endgroup}
\title{Circuit Lower Bounds, Help Functions, and the Remote Point
  Problem}
\author{V.~Arvind and Srikanth Srinivasan\\
Institute of Mathematical Sciences\\ C.I.T Campus,Chennai  600 113,
India\\
\tt{\{arvind,srikanth\}@imsc.res.in} 
}
\begin{document}

\maketitle

\begin{abstract}
  We investigate the power of Algebraic Branching Programs (ABPs)
  augmented with help polynomials, and constant-depth Boolean circuits
  augmented with help functions. We relate the problem of proving
  explicit lower bounds in both these models to the Remote Point
  Problem (introduced in \cite{APY}). More precisely, proving lower
  bounds for ABPs with help polynomials is related to the Remote Point
  Problem w.r.t.\ the rank metric, and for constant-depth circuits
  with help functions it is related to the Remote Point Problem
  w.r.t.\ the Hamming metric. For algebraic branching programs with
  help polynomials with some degree restrictions we show exponential
  size lower bounds for explicit polynomials.
\end{abstract}

\section{Introduction}

The goal of circuit complexity, which is central to computational
complexity, is proving lower bounds for explicit functions. The area
has made several advances in the last three decades mainly for
restricted circuit models. Some of the major results relating to
circuit size lower bounds are the following: Exponential size lower
bounds for constant-depth Boolean circuits \cite{Ha,Sm,FSS} and for
monotone Boolean circuits \cite{AB,Ra} computing certain explicit Boolean
functions; in the arithmetic circuit complexity setting, exponential
size lower bounds for monotone arithmetic circuits \cite{JS} computing
certain explicit polynomials, and exponential size lower bounds for
explicit polynomials in the case of \emph{noncommutative} algebraic
branching programs \cite{N}. More recently, \cite{R} has shown
superpolynomial lower bounds for multilinear arithmetic circuits. We
can say that these restricted models of computation have been
sufficiently well understood to show the nontrivial explicit lower bounds.

However, most of the central problems in the area continue to remain
open. For example, we do not know how to prove superlinear size lower
bounds for logarithmic depth Boolean circuits. We do not have
superpolynomial size lower bounds for depth-3 arithmetic circuits over
rationals.

The aim of this paper is to explore circuit complexity by augmenting
the power of some of these restricted models by allowing \emph{help
  functions} (in the arithmetic circuit case, \emph{help
  polynomials}). In this paper we consider two specific problems.

\begin{enumerate}
\item Proving size lower bounds for constant depth Boolean circuits
  augmented with help functions. More precisely, given any set
  $\{h_1,h_2,\cdots,h_m\}$ of help Boolean functions where
\[  
h_i:\{0,1\}^n\longrightarrow \{0,1\}, 
\]
and $m$ is (quasi)polynomial in $n$, we want
to find an explicit Boolean function $f:\{0,1\}^n\longrightarrow
\{0,1\}$ that requires superpolynomial size constant depth circuits
$C$ that takes as input $x_1,\cdots,x_n,h_1,\cdots,h_m$. The function
$f$ should be explicit in the sense that it is computable in
$2^{n^{O(1)}}$ time.

\item Proving size lower bounds for \emph{noncommutative} algebraic
  branching programs augmented with help polynomials. More precisely,
  given any set $\{h_1,h_2,\cdots,h_m\}$ of help polynomials in the
  noncommuting variables $\{x_1,x_2,\cdots,x_n\}$ over a field $\F$,
  we consider algebraic branching programs whose edges are labeled by
  $\F$-linear combinations of the $h_i$. The problem is to prove
  superpolynomial lower bounds for some explicit polynomial in
  $x_1,\cdots,x_n$ over $\F$.
\end{enumerate}

We formally define explicit Boolean functions and explicit
polynomials. 

We say that a family of Boolean functions $\{f_n\}_{n>0}$, where
$f_n:\{0,1\}^n\longrightarrow \{0,1\}$ for each $n$, is
\emph{explicit} if there is a uniform $2^{n^{O(1)}}$ time algorithm
that takes $x\in\{0,1\}^n$ as input and computes $f_n(x)$. 

We say that a family of \emph{multilinear} polynomials $\{P_n\}_{n>0}$
where $P_n(\x)\in\F[x_1,\cdots,x_n]$ is \emph{explicit} if there is a
uniform $2^{n^{O(1)}}$ time algorithm that takes as input $(m,0^n)$ for
a multilinear monomial $m$ (on indeterminates $x_1,x_2,\ldots,x_n$)
and outputs the coefficient of $m$ in the polynomial $P_n$.

%In the case of monotone arithmetic circuits, we are able to show
%exponential size lower bounds quite easily by adapting Raz's recent
%result \cite{Raz} to handle help polynomials. 

\subsection*{Contributions of this paper}

For constant-depth circuits and noncommutative ABPs, augmented with
help functions/polynomials respectively, proving lower bounds appears
to be nontrivial. 

\begin{enumerate}

\item We show that both the above lower bound problems are related to
  the \emph{Remote Point Problem} studied by Alon et al \cite{APY}.
  For constant-depth circuits we show a connection to the \emph{Remote
    Point Problem} in the Hamming metric studied in \cite{APY}. For
  noncommutative ABPs the problem is connected to the \emph{Remote
    Point Problem} in the \emph{rank metric} which is defined as the
  rank distance between matrices. 

\item We also study the Remote Point Problem in the Rank metric, and
  we build on ideas from Alon et al's work (for the Hamming metric
  version) in \cite{APY} to give a deterministic polynomial-time
  algorithm for certain parameters. However, these parameters are not
  sufficient to prove lower bounds for ABPs augmented with help
  polynomials. Similarly, the parameters achieved by the algorithm in
  \cite{APY} for the Hamming metric are not sufficient to prove
  explicit lower bounds for constant-depth circuits with help
  functions.

\item On the positive side, when the degrees of the help polynomials
  are somewhat restricted, using our solution to the Remote Point
  Problem w.r.t.\ the rank metric, we show exponential size lower
  bounds for noncommutative ABPs computing certain explicit
  polynomials (e.g.\ Theorem~\ref{thm_high_deg_bound}).
\end{enumerate}

\section{Constant Depth Circuits with Help Functions}
\label{section_rpp}

In this section, we address the problem of proving lower bounds for
constant depth circuits of polynomial size that have access to help
functions $\{h_1,h_2,\cdots,h_m\}$ at the input level. Our goal is to
show how the problem is related to the Remote Point Problem w.r.t. the
Hamming metric. 

Notice that we can consider the circuit inputs $x_1,x_2,\cdots,x_n$ to
be included in the set of help functions. Thus, we can assume that we
consider constant depth circuits with input $h_1,h_2,\cdots,h_m$ and
our goal is to prove superpolynomial lower bounds for such circuits.
Notice that we cannot predetermine a hard Boolean function as the hard
function chosen will depend on $h_1,h_2,\cdots,h_m$. 

It is well known that constant depth circuits can be well approximated
by polylogarithmic degree polynomials, for different notions of
approximation. We state the results of Tarui \cite{Ta} (also see
\cite{BRS}) in the form that we require. In what follows, the field we
work in will be $\field_2$, but our results can be stated over any
constant sized field, and over the rationals.

A polynomial $p(x_1,x_2,\cdots,x_n,r_1,\cdots,r_k)$ is called a
\emph{probabilistic polynomial} if it has as input the standard input
bits $x_1,x_2,\ldots,x_n$ and, in addition, random input bits
$r_1,r_2,\ldots,r_k$. We say that the polynomial $p$ represents a
Boolean function $f:\{0,1\}^n\longrightarrow \{0,1\}$ with error
$\epsilon$ if
\begin{align*}
\Prob[p(x_1,\cdots,x_n,r_1,\cdots,r_k)=f(x_1,\cdots,&x_n)]\\\geq
&1-\epsilon,
\end{align*}
where the probability is over random choices of bits
$r_j$.

\begin{theorem}{\rm\cite{Ta,BRS}}
  There is a probabilistic polynomial
  $p(x_1,x_2,\cdots,x_n,r_1,\cdots,r_k)$ of degree
  $O(\log(1/\epsilon)\log^2 n)$ with $O(\log(1/\epsilon)\log^2 n)$
  random bits that represents $OR(x_1,\cdots,x_n)$ with error
  $\epsilon$. Furthermore, $AND(x_1,\cdots,x_n)$ can be similarly
  represented.
\end{theorem}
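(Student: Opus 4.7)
The plan is to use the Razborov--Smolensky probabilistic polynomial construction over $\F_2$. The key observation is: for any nonzero $x \in \{0,1\}^n$ and a uniformly random subset $S \subseteq [n]$, the linear form $L_S(x) := \bigoplus_{i \in S} x_i$ equals $1$ with probability exactly $1/2$; whereas if $x = 0^n$ (so $OR(x) = 0$), then $L_S(x) = 0$ deterministically. Each $L_S$ is a degree-$1$ polynomial in the $x_i$ (with the random bits determining the subset $S$).

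To amplify the one-sided error, I would take $t$ independent random subsets $S_1,\ldots,S_t$ and define
\[
p(x, S_1, \ldots, S_t) \;=\; 1 \;-\; \prod_{j=1}^{t}\bigl(1 - L_{S_j}(x)\bigr)
\]
over $\F_2$, which has degree $t$. When $OR(x) = 0$, every $L_{S_j}(x)$ is $0$ and hence $p(x) = 0$ with probability $1$. When $OR(x) = 1$, the product is nonzero only if every $L_{S_j}(x) = 0$, which happens with probability $2^{-t}$; thus $\Prob[p(x) \neq OR(x)] \leq 2^{-t}$. Setting $t = \lceil \log(1/\epsilon) \rceil$ gives error at most $\epsilon$ and degree $O(\log(1/\epsilon))$, which is well inside the claimed $O(\log(1/\epsilon)\log^2 n)$ budget. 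For $AND$, apply De Morgan's identity $AND(x_1,\ldots,x_n) = 1 - OR(1-x_1,\ldots,1-x_n)$ and plug the above construction in.

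For the random-bit count, the naive construction needs $nt$ bits to sample the $S_j$ independently and uniformly, which exceeds the stated budget. To compress down to $O(\log(1/\epsilon)\log^2 n)$ bits, I would replace the uniform distribution on each $S_j$ by a pseudorandom distribution for which the crucial property $\Prob[L_{S_j}(x) = 1 \mid x \neq 0] \geq 1/2 - o(1)$ still holds; it suffices to take the indicator vector of $S_j$ from a $\delta$-biased distribution on $\{0,1\}^n$ with $\delta$ inverse polynomial in $n$, since biased parities of nonzero $x$ still deviate from $1/2$ by at most $\delta$. Such distributions admit explicit constructions with seed length $O(\log n)$ per subset, giving $O(t \log n) = O(\log(1/\epsilon) \log n)$ bits in total; any looseness in my accounting is absorbed by the extra $\log n$ factor in the stated bound.

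The main technical point is verifying that the biased-distribution replacement does not blow up the per-trial failure probability past $1/2 + o(1)$, so that $t = O(\log(1/\epsilon))$ trials still drive the error below $\epsilon$; the degree and correctness arguments are then immediate from the Razborov--Smolensky calculation above. Since the theorem is quoted from \cite{Ta,BRS}, the detailed derandomization is exactly the work done there, and the statement follows by specialising the general $AC^0$ construction to a single $OR$ (resp.\ $AND$) gate.
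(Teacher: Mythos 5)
The paper does not actually prove this theorem---it is quoted directly from Tarui and from Beigel--Reingold--Spielman, and the paper's own footnote to Theorem~\ref{tarui-thm} indicates that the cited construction is an integer-coefficient, Valiant--Vazirani-style hashing argument which is then reduced modulo~$2$. Your proposal is a genuinely different route: the Razborov--Smolensky one-sided-error $\F_2$-polynomial for $OR$, derandomized by drawing each subset indicator from a small-bias sample space. It is correct, and in fact yields \emph{stronger} parameters than the statement asks for---degree $O(\log(1/\epsilon))$ in the $x$-variables and $O(\log(1/\epsilon)\log n)$ random bits---so it comfortably fits the quoted $O(\log(1/\epsilon)\log^2 n)$ budget. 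Two small points are worth making explicit. First, for a $\delta$-biased space with $\delta$ a fixed constant below $1/2$, the per-trial failure probability $\Prob[L_{S_j}(x)=0]$ for nonzero $x$ is at most $1/2+\delta<1$, so $t=O(\log(1/\epsilon))$ independent draws amplify to error $\epsilon$ with the hidden constant depending only on~$\delta$. Second, ``degree'' in this theorem (as it is used when it feeds into Theorem~\ref{tarui-thm}) means degree in the input variables after the random bits are fixed, so the potentially large $\F_2$-degree of the small-bias generator in the seed bits $r_1,\ldots,r_k$ does not affect the claimed bound. The $\log^2 n$ factors in the quoted statement are an artifact of the specific cited construction; your argument shows they are not needed for a single $OR$ or $AND$ gate.
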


Building on the above, the following well-known theorem is shown in
\cite{Ta,BRS}.

\begin{theorem}\label{tarui-thm}{\rm\cite{Ta,BRS}}
	Every function $f$ computed by a boolean circuit of depth $d$ and
	size $s$ is represented by a probabilistic polynomial
	$p(x_1,x_2,\cdots,x_n,r_1,\cdots,r_k)$ of degree
  $O(\log(1/\epsilon)\log^2 n)^d$ that represents $f(x_1,\cdots,x_n)$
  with error $s\epsilon$.
	\symbolfootnote[2]{Tarui's construction yields a probabilistic polynomial
  $q$ with \emph{integer} coefficients. We can obtain the desired
  polynomial $p$ over $\field_2$ from $q$ by reducing the coefficients
  modulo $2$.}

\end{theorem}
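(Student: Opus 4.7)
The plan is to construct the polynomial $p$ by structural induction on the circuit, replacing each gate by its probabilistic-polynomial gadget from the preceding theorem, and tracking degree and error separately. For each input literal use the degree-$1$ error-free polynomial $x_i$ (or $1-x_i$). For an internal AND/OR gate $g$ whose sub-circuits are $g_1,\ldots,g_t$, inductively assume polynomials $q_{g_1},\ldots,q_{g_t}$ have been built using previously drawn random bits; then draw a fresh independent block of random bits $r^{(g)}$ and set $q_g := p_\star(q_{g_1},\ldots,q_{g_t},r^{(g)})$, where $p_\star$ is the AND/OR gadget of the preceding theorem with error parameter $\epsilon$. The polynomial $p$ associated with the circuit is the $q$ built at the output gate.

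For the degree, write $D := O(\log(1/\epsilon)\log^2 n)$ for the per-gate degree. Composition gives $\deg(q_g)\le D\cdot \max_i \deg(q_{g_i})$, so by induction on the depth of $g$ we obtain $\deg(q_g)\le D^{\mathrm{depth}(g)}$. At the output this yields degree at most $D^d = O(\log(1/\epsilon)\log^2 n)^d$, as required.

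For the error, let $B_g$ be the event, over the joint randomness used inside $q_g$, that $q_g(\x,\ldots)$ disagrees with the Boolean value of the sub-circuit rooted at $g$. If none of the child events $B_{g_i}$ holds then the $q_{g_i}$ all return the correct $0/1$ values, and conditioned on this, the fresh independent block $r^{(g)}$ causes $q_g$ to err with probability at most $\epsilon$ by the preceding theorem. Combined with a union bound this gives $\Prob[B_g]\le \epsilon + \sum_i \Prob[B_{g_i}]$, and iterating across the circuit, which has at most $s$ gates, yields $\Prob[B_{\mathrm{root}}]\le s\epsilon$.

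The step that requires care---rather than any real obstacle---is the independence of the random blocks used at different gates: the union-bound argument relies on $r^{(g)}$ being independent of all randomness consumed by the $q_{g_i}$, which is guaranteed because a brand new block is drawn at every gate. The residual issue flagged in the footnote---that Tarui's construction naturally lives over $\integer$ rather than $\field_2$---is handled by reducing coefficients modulo $2$; this preserves agreement on $0/1$ inputs and affects neither the degree bound nor the union-bound calculation above.
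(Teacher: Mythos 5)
The paper itself does not prove this theorem; it cites it from \cite{Ta,BRS} as a well-known consequence of the per-gate construction stated just before it. Your reconstruction is essentially the standard argument from those references: replace each gate by a probabilistic-polynomial gadget with its own fresh random block, track degree multiplicatively through depth, and control error globally. The construction and the degree bound $D^d$ are exactly right, as is the handling of the $\integer$-to-$\field_2$ reduction in the footnote.

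One small imprecision worth fixing is in the error analysis. The recursion $\Prob[B_g]\le \epsilon + \sum_i \Prob[B_{g_i}]$ is correct gate-by-gate, but ``iterating'' it down to the leaves only gives $(\#\text{gates})\cdot\epsilon$ when the circuit is a \emph{tree}; in a general DAG a shared subgate gets counted once per path to the root, so naive unwinding can exceed $s\epsilon$. The cleaner (and standard) way to get exactly $s\epsilon$ is to define, for each gate $g$, the ``misfire'' event $E_g$ that the gadget at $g$ errs \emph{when fed the correct Boolean values of its children}; since the random block $r^{(g)}$ is fresh, $\Prob[E_g]\le\epsilon$ unconditionally, and if none of the $E_g$ occurs then a straightforward induction shows every $q_g$ outputs the correct Boolean value. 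A single union bound over the at most $s$ events $E_g$ then gives $\Prob[B_{\mathrm{root}}]\le\Prob[\bigcup_g E_g]\le s\epsilon$. With that reformulation your argument is complete and matches what Tarui and Beigel--Reingold--Spielman prove.
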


Now, consider Boolean functions computed by constant-depth circuits
with help functions. More precisely, let $H=\{h_1,h_2,\cdots,h_m\}$
denote a set of Boolean help functions
$h_i:\{0,1\}^n\longrightarrow\{0,1\}$. For $s,d\in\naturals$, we
define $\sd_H(s,d)$ to be 
the set of Boolean functions $f:\{0,1\}^n\longrightarrow\{0,1\}$
such that there is a depth $d$ circuit $C$ of size at most $s$
such that
\[
f(\x)= C(h_1(\x),h_2(\x),\cdots,h_m(\x)),
\]
where $\x$ denotes the $n$-tuple $(x_1,x_2,\cdots,x_n)$. The lower
bound problem is to construct, for each fixed $d$, and for any given
set of help functions $H$ and $s\in\naturals$, an explicit Boolean
function $g$ such that $g$ is \emph{not} in $\sd_H(s,d)$.

We do not have a solution to this problem. However, we show that this
lower bound problem is connected to the Remote Point Problem (RPP) 
introduced by Alon et al \cite{APY}. An interesting deterministic
algorithm for RPP is presented in \cite{APY}. A deterministic
algorithm with somewhat stronger parameters would solve our lower
bound question. We now explain this connection.

\noindent\textbf{The Remote Point Problem (RPP) \cite{APY}.}~~Given
a $k$-dimensional subspace $V\subseteq \F_2^N$ the problem is to find
a vector $v\in \F_2^N$ such that the Hamming distance $d(u,v)\geq r$
for every $u\in V$ if it exists,. We will call an efficient algorithm
that does this an $(N,k,r)$-solution to the problem. 

The challenge is to give an efficient deterministic algorithm for
RPP. A randomized algorithm that simply picks $v$ at random would be a
good solution with high probability (for most parameters $k$ and $r$
of interest). Alon et al in \cite{APY} give an $(N,k,r)$ solution for
$r=O\left({\frac{N\log k}{k}}\right)$, where their deterministic algorithm runs
in time polynomial in $N$. We now state and prove the connection
between RPP and our lower bound question.

\begin{theorem}
  Let $N=2^n$. For any constant $d\in\naturals$, and any constants
  $c_0>c_1>c_2>0$ such that $c_0 > (c_1+2c_2)d + c_2$, if the Remote
  Point Problem with parameters $(N,k,r)$ -- for $k=2^{(\log n)^{c_0}}$
	and $r={\frac{N}{2^{(\log n)^{c_1}}}}$ -- can be solved in time
	$2^{n^{O(1)}}$, then, for any given set of help functions $H$ such
	that $|H| = 2^{(\log n)^{c_2}}$ and $s = cn^c$, there is an explicit
	Boolean function that does not belong to $\sd_H(s,d)$ for large
	enough $n$ (depending on $c$).
\end{theorem}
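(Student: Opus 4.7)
The plan is to identify Boolean functions $f:\{0,1\}^n\to\{0,1\}$ with their truth tables in $\F_2^N$ (where $N=2^n$), and to exhibit a low-dimensional subspace $V\subseteq \F_2^N$ such that every function in $\sd_H(s,d)$ lies within Hamming distance less than $r$ of $V$, and $\dim V\leq k$. Any vector produced by the assumed $(N,k,r)$-solution to RPP on $V$ is then the truth table of an explicit function outside $\sd_H(s,d)$.

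\emph{Approximation step.} Fix $f\in\sd_H(s,d)$, so $f(\x)=C(h_1(\x),\ldots,h_m(\x))$ for a depth-$d$, size-$s$ Boolean circuit $C$ on $m=2^{(\log n)^{c_2}}$ inputs. Choose $\epsilon$ so that $s\epsilon<r/N=2^{-(\log n)^{c_1}}$; since $s=cn^c$ this requires only $\log(1/\epsilon)=O((\log n)^{c_1})$. By Theorem~\ref{tarui-thm}, $C$ is represented with error $s\epsilon$ by a probabilistic polynomial in its $m$ inputs of degree
\[
\Delta \;=\; O(\log(1/\epsilon)\log^2 m)^d \;=\; O\!\left((\log n)^{(c_1+2c_2)d}\right).
\]
An averaging argument fixes the random bits so that, after substituting $z_i=h_i(\x)$, the resulting deterministic polynomial $\tilde p$ of degree $\Delta$ in $h_1,\ldots,h_m$ agrees with $f$ on strictly more than $N-r$ inputs, i.e.\ its truth table lies within Hamming distance less than $r$ of $f$'s truth table.

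\emph{Building $V$ and invoking RPP.} Let $V\subseteq\F_2^N$ be the $\F_2$-span of the truth tables of all multilinear monomials $h_{i_1}(\x)\cdots h_{i_\ell}(\x)$ with $\ell\leq\Delta$. The number of such monomials is at most $m^{O(\Delta)}$, hence
\[
\log_2 \dim V \;\leq\; O\!\left((\log n)^{c_2+(c_1+2c_2)d}\right),
\]
and the hypothesis $c_0>c_2+(c_1+2c_2)d$ guarantees that $\dim V\leq 2^{(\log n)^{c_0}}=k$ for all sufficiently large $n$. Each basis vector is a truth table of length $N$ computable in time $2^{O(n)}$ (evaluate the monomial on every $\x$), and there are at most $k$ of them, so the basis of $V$ (padded up to dimension $k$ if needed) is produced in time $2^{n^{O(1)}}$. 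Feeding $V$ into the assumed RPP algorithm returns in time $2^{n^{O(1)}}$ a vector $v\in\F_2^N$ at Hamming distance at least $r$ from every $u\in V$. Let $g_H$ be the Boolean function whose truth table is $v$. By the approximation step, every element of $\sd_H(s,d)$ is within Hamming distance less than $r$ of $V$, so $g_H\notin\sd_H(s,d)$. Explicitness of $g_H$ follows because the entire pipeline (Tarui basis, RPP call) runs in time $2^{n^{O(1)}}$, so a single bit $g_H(\x)=v_{[\x]}$ can be computed in that time.

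\emph{Main obstacle.} The only real difficulty is parameter book-keeping: $\epsilon$ must be small enough that $s\epsilon N\leq r$, yet large enough that the resulting degree $\Delta$ keeps $m^{\Delta}$ below $k$. The condition $c_0>(c_1+2c_2)d+c_2$ is tuned precisely for this balance, with the $+c_2$ summand arising from the $\log m=(\log n)^{c_2}$ factor when exponentiating $m^\Delta$, and the $(c_1+2c_2)d$ summand coming from $\Delta=(\log(1/\epsilon)\log^2 m)^d$ after substituting $\log(1/\epsilon)=\Theta((\log n)^{c_1})$ and $\log^2 m=(\log n)^{2c_2}$.
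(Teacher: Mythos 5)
Your proposal is correct and takes essentially the same route as the paper's proof: apply Tarui's probabilistic polynomial to the depth-$d$ circuit, fix the random bits by averaging over the distribution induced by $(h_1(\x),\ldots,h_m(\x))$, take $V$ to be the $\F_2$-span of the truth tables of low-degree monomials in the $h_i$, and feed $V$ to the assumed RPP solver. The only difference is cosmetic: the paper introduces auxiliary exponents $c_0'<c_0-c_2$ and $c_1'>c_1$ with $c_0'=(c_1'+2c_2)d$ to make the slack explicit, whereas you fold it into the $O(\cdot)$'s (and, exactly like the paper, you implicitly use that $\log(1/\epsilon)$ can also absorb the $\log s$ term coming from the error $s\epsilon$, which is automatic once $c_1\geq 1$).
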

\begin{proof}
  The proof is an easy application of Theorem~\ref{tarui-thm}. Let $H
  = \bra{h_1,h_2,\ldots,h_m}$.  Consider a circuit $C$ corresponding
  to the class $\sd_H(s,d)$.  To wit, the function it computes is
  $C(h_1(\x),h_2(\x),\cdots,h_m(\x))$, where $C$ is depth-$d$,
  unbounded fanin and of size $cm^c$. Now, for $\x$ picked uniformly
  at random from $\{0,1\}^n$ suppose the probability distribution of
  $(h_1(\x),h_2(\x),\cdots,h_m(\x))$ on the set $\{0,1\}^m$ is
  $\mu$. By Theorem~\ref{tarui-thm} there is a probabilistic
  polynomial $p(y_1,y_2,\cdots,y_m,r_1,r_2,\cdots,r_t)$ of degree
  $O(\log(1/\epsilon)\log^2 m)^d$ that represents
	$C(y_1,y_2,\cdots,y_m)$ with error $cm^c\epsilon$. By a standard
	averaging argument it follows that we can fix the random bits
	$r_1,r_2,\cdots,r_t$ to get
\begin{align*}
\Prob_\mu[p(y_1,y_2,\cdots,&y_m,r_1,r_2,\cdots,r_t)=\\
&C(y_1,y_2,\cdots,y_m)]\geq 1-cm^c\epsilon,
\end{align*}
where $(y_1,y_2,\cdots,y_m)$ is picked according to distribution $\mu$.
But that is equivalent to 
\begin{align}
\Prob[p(&h_1(\x),\cdots,h_m(\x),r_1,r_2,\cdots,r_t)=\notag\\
&C(h_1(\x),h_2(\x),\cdots,h_m(\x))]\geq 1-cm^c\epsilon,\notag\\
\label{help-eqn}
\end{align}
where $\x$ is picked uniformly at random from $\{0,1\}^n$.

Choose $c_0'< c_0-c_2$ and $c_1' > c_1 (> c_2)$ such that $c_0' =
(c_1'+2c_2)d$.  Let $\epsilon={\frac{1}{2^{(\log n)^{c_1'}}}}$. Then
the degree of $p$ above is $O(\log n)^{c_0'}$. We will consider
Boolean functions on $n$ bits as vectors in $\F_2^N$.  Let $V$ be the
subspace in $\F_2^N$ spanned by all monomials (i.e, products of help
functions) of degree at most $O(\log n)^{c_0'}$. Then the dimension
$k$ of $V$ is $m^{O(\log n)^{c_0'}} < 2^{(\log n)^{c_0}}$. By 
Inequality (\ref{help-eqn}), it follows that finding a vector
$v\in\F_2^N$ that is $r$-far from $V$ for $r = \frac{N}{2^{(\log
n)^{c_1}}} > cNm^c\epsilon$ in time $2^{n^{O(1)}}$ would give us an
explicit Boolean function that is not in $\sd_H(s,d)$.
\end{proof}

\begin{remark}
  We recall a nice related result of Jin-Yi Cai: He has shown in
  \cite{Ca} an exponential lower bound for the size of constant-depth
  circuits that computes $m$ specific parities in the presence of
  (any) $m-1$ help functions, where $m\leq n^{1/5}$. His proof is
  essentially based on Smolensky's dimension argument
  \cite{Sm}. However, in our setting where we allow for polynomially
	many help functions Smolensky's argument \cite{Sm} does not work.
\end{remark}

We now state an interesting connection between explicit lower bounds
against $\sd_H(n^c,d)$ and lower bounds against the polynomial time
many-one closure of $\AC^0$.  The proof proceeds by a simple
diagonalization argument. For any complexity class $\mathcal{C}$, let
$\mathcal{R}^p_m(\mathcal{C})$ denote the polynomial-time many-one
closure of $\mathcal{C}$, i.e, the class of languages that can be
reduced in polynomial time to a language in $\mathcal{C}$.

\begin{theorem}\label{thm_exp_ac0}
  Suppose, for every fixed $d\in\naturals$, there is a $2^{n^{O(1)}}$
  time algorithm $\mathcal{A}$ that takes as input a set of help
	functions $H = \setcond{h_i:\bra{0,1}^n\rightarrow\bra{0,1}}{i\in
	[m]}$ where $m \leq n^{\log n}$ \symbolfootnote[3]{Here, $\log n$
	can be replaced by any function $f:\naturals\rightarrow\naturals$
	such that $f(n)$ is $2^{n^{O(1)}}$-time computable, $f(n) =
	\omega(1)$, and $f(n)\leq n^{O(1)}$.} (where each $h_i$ is given by
	its truth-table), and $\mathcal{A}$ outputs the truth-table of a
	Boolean function $g:\bra{0,1}^n\rightarrow\bra{0,1}$ such that for
	any $c>0$, $g\notin \sd_H(n^c,d)$ for almost all $n$. Then
	$\EXP\nsubseteq \mathcal{R}^p_m(\AC^0)$.
\end{theorem}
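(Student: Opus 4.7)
Suppose for contradiction that $\EXP \subseteq \mathcal{R}^p_m(\AC^0)$. The idea is to use the hypothesised algorithm $\mathcal{A}_d$ for each fixed $d$ to construct a language $L_d \in \EXP$ outside $\mathcal{R}^p_m(\AC^0_d)$ (the subclass of $\mathcal{R}^p_m(\AC^0)$ in which the target $\AC^0$ family has depth $d$), and then to paste the $L_d$'s together by padding into a single $L \in \EXP$ outside $\mathcal{R}^p_m(\AC^0) = \bigcup_d \mathcal{R}^p_m(\AC^0_d)$, a contradiction. The observation that makes $\mathcal{A}_d$ useful is that any $L' \in \mathcal{R}^p_m(\AC^0_d)$ is specified by a pair $(M, \mc{C})$, where $M$ is a polynomial-time many-one reduction (with $|M(x)| \leq n^{O(1)}$) and $\mc{C} = \bra{C_n}$ is a polynomial-size depth-$d$ $\AC^0$ family; putting $H_M = \bra{h_1,\ldots,h_m}$ with $h_j(x) = (M(x))_j$, the restriction of $L'$ to $\bra{0,1}^n$ lies in $\sd_{H_M}(\poly{n}, d)$.

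For fixed $d$, enumerate all such pairs $\bra{(M_i, \mc{C}_i)}_{i \geq 1}$ by dovetailing over Turing-machine descriptions and pick a strictly increasing sequence of diagonalisation lengths $n_1 < n_2 < \cdots$ with $n_i$ large enough that (i) the output length of $M_i$ on inputs of length $n_i$ is at most $n_i^i \leq n_i^{\log n_i}$ and (ii) the ``for almost all $n$'' guarantee of $\mathcal{A}_d$ applies at $n_i$ against the polynomial-size bound of $\mc{C}_i$. At length $n_i$, compute the truth-tables of the help functions $H_i = H_{M_i}$ in time $2^{O(n_i)}$, invoke $\mathcal{A}_d$ on $H_i$ to obtain $g_i : \bra{0,1}^{n_i} \to \bra{0,1}$ with $g_i \notin \sd_{H_i}(n_i^c, d)$ for every $c > 0$, and set $L_d \cap \bra{0,1}^{n_i}$ to be the support of $g_i$; make $L_d$ empty on all other lengths. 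Since $L(M_i, \mc{C}_i) \cap \bra{0,1}^{n_i} \in \sd_{H_i}(\poly{n_i}, d)$ while $g_i$ is not, the two disagree at length $n_i$; hence $L_d \neq L(M_i, \mc{C}_i)$ for every $i$, so $L_d \notin \mathcal{R}^p_m(\AC^0_d)$. Since a query of $L_d$ at length $n$ costs one invocation of $\mathcal{A}_d$ plus the computation of the relevant help-function truth-tables, $L_d \in \tm{DTIME}(2^{n^{O(1)}}) \subseteq \EXP$.

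To finish, combine the $L_d$'s into a single language
\[
L \;=\; \setcond{0^{p_d(|x|)}\,1\,1^d\,0\,x}{d \geq 1,\ x \in L_d},
\]
where $p_d$ is a polynomial chosen large enough (depending on the running-time exponent of $\mathcal{A}_d$) that deciding $L_d(x)$ takes at most $2^{|y|}$ steps on input $y = 0^{p_d(|x|)}\,1\,1^d\,0\,x$; this puts $L \in \EXP$. If $L \in \mathcal{R}^p_m(\AC^0_d)$ via some reduction $r$ and $L' \in \AC^0_d$, then $x \mapsto r(0^{p_d(|x|)}\,1\,1^d\,0\,x)$ is a polynomial-time many-one reduction from $L_d$ to $L'$, giving $L_d \in \mathcal{R}^p_m(\AC^0_d)$ and contradicting the construction. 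The main bookkeeping point is reconciling the $d$-dependent running-time exponent of $\mathcal{A}_d$ with the single polynomial that must bound the algorithm deciding $L$; this is exactly what the $d$-dependent padding $p_d$ is chosen to absorb.
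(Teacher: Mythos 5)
Your construction of $L_d$ for each fixed $d$ — enumerating pairs, picking diagonalisation lengths $n_i$, computing help functions from $M_i$, invoking $\mathcal{A}_d$, and showing $L_d \in \EXP$ and $L_d \notin \mathcal{R}^p_m(\AC^0_d)$ — is sound and essentially the same idea as the paper's diagonalisation (the paper enumerates reductions alone, with each appearing infinitely often, rather than pairs $(M_i,\mc{C}_i)$, which also sidesteps any need to enumerate circuit families). Where you diverge, and where there is a genuine gap, is in the last step of pasting the $L_d$'s together via padding. The hypothesis asserts, \emph{for each fixed} $d$, the existence of an algorithm $\mathcal{A}_d$ running in time $2^{n^{O(1)}}$; it does not give a single algorithm taking $d$ as an extra input, nor any computable map $d \mapsto (\mathcal{A}_d, \text{its exponent})$. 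Your padded language $L = \bigcup_{d}\{0^{p_d(|x|)}1 1^d 0 x : x\in L_d\}$ therefore has no single exponential-time machine deciding it: on input $y$, after parsing out $d$ and $x$, the machine would have to simulate $\mathcal{A}_d$ (and know the exponent that determines $p_d$), but a fixed Turing machine cannot hard-code infinitely many non-uniformly given algorithms. So the assertion $L\in\EXP$ does not follow from the hypothesis, and the padding route breaks.

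The paper avoids this entirely by invoking $\EXP$-completeness rather than padding. Since $\EXP$ has a problem $K$ that is complete under polynomial-time many-one reductions, the assumption $\EXP\subseteq\mathcal{R}^p_m(\AC^0)$ already forces $\EXP\subseteq\mathcal{R}^p_m(\AC^0_{d_0})$ for a \emph{single} fixed $d_0$: reduce any $L\in\EXP$ to $K$, then compose with the reduction witnessing $K\in\mathcal{R}^p_m(\AC^0_{d_0})$. Now only one language $L_{d_0}\in\EXP\setminus\mathcal{R}^p_m(\AC^0_{d_0})$ is needed for the contradiction, and only the single algorithm $\mathcal{A}_{d_0}$ ever has to be instantiated, so no uniformity in $d$ is required. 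Replacing your padding construction by this completeness argument repairs the proof while keeping the rest of your diagonalisation intact.
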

\begin{proof}
	For any $d\in\naturals$, let $\AC^0_d$ denote the class of languages
	that are accepted by polynomial-sized circuit families of polynomial
	size and depth $d$.

	Note that to prove that $\EXP\nsubseteq\mathcal{R}^p_m(\AC^0)$, it
	suffices to prove that $\EXP\nsubseteq\mathcal{R}^p_m(\AC^0_d)$ for
	each fixed $d\in\naturals$, since $\EXP$ contains problems that are
	complete for it under polynomial-time many-one reductions. We will
	now describe, for any fixed $d\in\naturals$, an $\EXP$ machine that
	accepts a language $L_d\notin \mathcal{R}^p_m(\AC^0_d)$.

	We proceed by diagonalization. Let $R_1,R_2,R_3,\ldots$ be any
	standard enumeration of all polynomial-time many-one reductions
	such that each reduction appears infinitely often in the list. Fix
	$n\in\naturals$ and let $m = \max_{y\in\bra{0,1}^n} |R_n(y)|$. On an
	input $x\in\bra{0,1}^n$, the $\EXP$ machine does the following: for
	each $y\in \bra{0,1}^n$, it runs $R_n$ for $n^{\log n}$ time and
	computes $R_n(y)$ (if $R_n$ does not halt in time $n^{\log n}$, the
	machine outputs $0$ and halts). It can thus produce the truth tables of
	functions $h_i:\bra{0,1}^n\rightarrow \bra{0,1}$ ($i\in[m]$) such that
	for each $y\in\bra{0,1}^n$, $h_i(y)$ is the $i$th bit of $R_n(y)$ if
	$|R_n(y)|\geq i$ and $0$ otherwise. Now, by assumption, in time
	$2^{n^{O(1)}}$, the $\EXP$ machine can compute the truth table of a
	function $g_n:\bra{0,1}^n\rightarrow\bra{0,1}$ such that, for any
	$c>0$, $g_n\notin \sd_{\bra{h_1,\ldots,h_m}}(n^c,d)$ for large enough
	$n$.  Having computed $g_n$, the $\EXP$ machine just outputs $g_n(x)$.

	It is clear, by a standard argument, that $L_d$ cannot be
	polynomial-time many-one reduced to any language in $\AC^0_d$.
\end{proof}

\section{Noncommutative Algebraic Branching Programs}
\label{section_noncomm_abps_intro}

Let $X=\{x_1,x_2,\cdots,x_n\}$ be a set of $n$ noncommuting variables,
and $\fx$ denote the non-commutative ring of polynomials over $X$ with
coefficients from the field $\field$. For $f\in\fx$, let $d(f)$
denote the degree of $f$. Let $\mono{d}{X}$ be the set of degree $d$
monomials over $X$. For a polynomial $f$ and a monomial $m$ over $X$,
let $f(m)$ denote the coefficient of $m$ in $f$. A nonempty subset
$H\subseteq\fx$ is \emph{homogeneous} if there is a $d\in\naturals$
such that all the polynomials in $H$ are homogeneous of degree $d$.

Let $G = (V,E)$ be a directed acyclic graph.  For $u,v\in V$, let
$\p_{u,v}$ be the set of paths from $u$ to $v$, where a path in
$\p_{u,v}$ is a tuple of the form $((u_0 , u_1 ), (u_1 , u_2 ),\ldots
, (u_{l-1} ,u_l ))$ where $u_0 = u$ and $u_l = v$.

\begin{definition}
  Let $X=\bra{x_1,x_2,\ldots,x_n}$ and $Y=\bra{y_1,y_2,\ldots,y_m}$ be
  disjoint variable sets. Let
  $H=\bra{h_1,h_2,\ldots,h_m}\subseteq\fx$. An \emph{Algebraic
    Branching Program (ABP) with help polynomials $H$} is a layered
  directed acyclic graph $A$ with a source $s$ and a sink $t$. Every
  edge $e$ of $A$ is labeled by a linear form $L(e)$ in variables
  $X\union Y$. If $L(e) = \sum_i\alpha_ix_i+\sum_j\beta_jy_j$, the
  \emph{polynomial $L'(e)$ associated with edge $e$} is obtained by
  substituting $h_j$ for $y_j$, $1\leq j\leq m$, in $L(e)$. I.e.\
  $L'(e) = \sum_i\alpha_ix_i + \sum_j\beta_jh_j$. The \emph{size} of
  $A$ is the number of vertices in $A$.
\end{definition}

Given a path $\gamma = (e_1,e_2,\ldots,e_l)$ in $A$, define the
polynomial $f_\gamma= L'(e_1)\cdot L'(e_2)\cdot\ldots\cdot L'(e_t)$ (note that
the order of multiplication is important). For vertices $u$ and $v$ of
$A$, we define the polynomial
$f_{u,v}=\sum_{\gamma\in\p_{u,v}}f_\gamma$. The ABP $A$ computes
the polynomial $f_{s,t}$.

Suppose $L(e) = \sum_i\alpha_ix_i + \sum_j\beta_jy_j$. We say that the
edge $e$ is \emph{homogeneously labeled} if all the polynomials in the
set $\setcond{x_i}{\alpha_i\neq 0}\union\setcond{h_j}{\beta_j\neq 0}$
are homogeneous and of the same degree $d(e)$. If the above set is
empty, we let $d(e)=0$. Now, suppose all edges of an ABP $A$ are
homogeneously labeled; then, for a path $\gamma = (e_1,e_2,\ldots,e_t)$ in
$A$ let $d(\gamma)=\sum_{i=1}^td(e_i)$. The ABP $A$ with help
polynomials $H$ is \emph{homogeneous} if:
\begin{itemize}
\item all the edges in $A$ are homogeneously labeled,
\item For all $u,v$ in $A$ and $\gamma_1,\gamma_2\in\p_{u,v}$,
  $d(\gamma_1) = d(\gamma_2)$.
\end{itemize}

For a homogeneous ABP $A$ with help polynomials and any pair of
vertices $u,v$ in $A$, the polynomial computed from $u$ to $v$ is
homogeneous.

%We say that $A$ is a \emph{homogeneous ABP using the help
%polynomials $H$} if all the polynomials in $H$ are homogeneous (but
%not necessarily of the same degree) and each edge in $A$ is
%homogeneously labeled.
%
In the absence of help polynomials, this gives the standard Algebraic
Branching Programs as defined in, e.g.\ Nisan \cite{N}. Nisan \cite{N}
has shown explicit lower bounds, e.g.\ for the Permanent and
Determinant, for this model of computation. Our aim is to prove lower
bounds for ABPs with help polynomials.

We show that any ABP with arbitrary help polynomials computing a
homogeneous polynomial can be transformed into an equivalent
\emph{homogeneous} ABP with homogeneous help polynomials with only a
small increase in size. Thus, it suffices to prove lower bounds
against homogeneous ABPs with help polynomials.  Fix the help
polynomial set $H\subseteq\fx$. Let $m=|H|$ and $d(H)=\max_{h\in
H}d(h)$. Also, fix some $d\in\naturals$.

Given $f\in \fx$ and $i\in\naturals$, let $f^{(i)}$ denote the $i$th
homogeneous part of $h$. For $2\leq i\leq d$, let $\tH_i =
\setcond{h^{(i)}\in\fx}{h\in H}$; let $\tH = \bigcup_{2\leq i}\tH_i$.
Let $\tilde{m}_i$ denote $|\tH_i|$ for each $i$, and let
$\tilde{m}$ denote $|\tH| = \sum_i\tilde{m}_i$. We show the following
homogenization theorem.
\begin{theorem}\label{thm_homogen}
Given any ABP $A$ using the help polynomials $H$ computing a
homogeneous polynomial $f$ of degree $d\geq 1$, there is a homogeneous
ABP $\tA$ using the help polynomials $\tH$ that computes the same
polynomial as $A$, where the size of $\tA$ is at most $S(d+1)$, where
$S$ denotes the size of $A$.
\end{theorem}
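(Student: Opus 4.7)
The plan is to apply the standard homogenization construction for algebraic branching programs, adapted to the help-polynomial setting, by replacing each vertex $v$ of $A$ with a column of $d+1$ copies indexed by the total degree accumulated along a path from the source. The edge labels in $\tA$ are then obtained by slicing $L'(e)$ for each edge $e$ of $A$ into its homogeneous components and routing each component to an edge that advances the degree index by the corresponding amount.

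Concretely, for each help polynomial $h_j\in H$, decompose $h_j=\sum_{k\geq 0}h_j^{(k)}$: the parts of degree $0$ are scalars in $\field$, the parts of degree $1$ are linear forms in $X$, and the parts of degree $\geq 2$ constitute exactly $\tH$. Let the vertex set of $\tA$ be $V(A)\times\bra{0,1,\ldots,d}$ with source $(s,0)$ and sink $(t,d)$. For each edge $e=(u,v)$ of $A$ with $L(e)=\sum_i\alpha_ix_i+\sum_j\beta_jy_j$, write $L'(e)=\sum_{k\geq 0}M_k(e)$, where $M_0(e)=\sum_j\beta_jh_j^{(0)}\in\field$, $M_1(e)=\sum_i\alpha_ix_i+\sum_j\beta_jh_j^{(1)}$ is a linear form in $X$ alone, and $M_k(e)=\sum_j\beta_jh_j^{(k)}$ for $k\geq 2$ is a linear combination of the elements of $\tH_k$. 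Then, for each $k\geq 0$ and each $i$ with $0\leq i$ and $i+k\leq d$, I add an edge in $\tA$ from $(u,i)$ to $(v,i+k)$ carrying a linear form, in $X$ together with fresh variables corresponding to $\tH$, that evaluates to $M_k(e)$.

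Homogeneity of $\tA$ is then immediate: an edge from layer $i$ to layer $i+k$ carries a polynomial that is either zero or homogeneous of degree exactly $k$, and any two paths between a common pair of vertices $(u,i)$ and $(v,j)$ must accumulate total degree $j-i$. For correctness, every $\tA$-path from $(s,0)$ to $(t,d)$ projects to an $A$-path $\gamma$ from $s$ to $t$ equipped with a choice, at each edge $e$, of one homogeneous piece $M_{k_e}(e)$; the corresponding product is precisely one degree-$d$ summand of $\prod_{e\in\gamma}L'(e)=f_\gamma$, and summing over all $\tA$-paths recovers the degree-$d$ homogeneous component of $f_{s,t}$, which equals $f$ by hypothesis. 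The size bound $|V(\tA)|\leq S(d+1)$ is by inspection. The main subtlety is the $k=0$ contribution $M_0(e)$, which produces edges between vertices in the same layer carrying field-constant labels; this requires permitting constants as edge labels (i.e., affine rather than strictly linear forms) and checking that $\tA$ remains a DAG, which it does since the $d+1$ copies of any single vertex of $A$ form an antichain and the inter-copy edges respect the DAG order of $A$.
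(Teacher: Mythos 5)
Your construction is essentially stage one of the paper's proof: form the layered vertex set $V(A)\times\{0,\ldots,d\}$, slice each $L'(e)$ into homogeneous pieces, and route the degree-$k$ piece from $(u,i)$ to $(v,i+k)$. That much is correct, and your argument that the resulting ABP computes $f_{s,t}^{(d)}=f$ and is a DAG of size $\leq S(d+1)$ is fine.

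However, there is a genuine gap in how you handle the $k=0$ pieces. You flag the issue but resolve it by proposing to ``permit constants as edge labels,'' i.e.\ to relax the model. This does not prove the theorem as stated: the paper's definition requires every edge of an ABP to be labeled by a \emph{linear form} in $X\cup Y$ (no constant term), and since every polynomial in $\tH$ has degree $\geq 2$, there is no way for a nonzero field element to appear as an edge weight in a legitimate homogeneous ABP over $\tH$. Moreover, allowing constant-weight edges within a layer would break the homogeneity bookkeeping that Theorem~\ref{thm_decomp} relies on (the $k$-cut decomposition assumes $d(\gamma_1)=d(\gamma_2)$ for all $u$--$v$ paths, which is used to partition $s$--$t$ paths at a cut). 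So the object you construct is not a homogeneous ABP in the paper's sense, and downstream results cannot be applied to it directly.

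The paper closes this gap by a second, nontrivial elimination stage: repeatedly contract each constant-labeled edge $e=(\tilde u,\tilde v)$ (with $\tilde v$ not the sink) by deleting $e$ and folding $L(e)\cdot L(e')$ into the label of $(\tilde u,\tilde w)$ for each outgoing edge $e'=(\tilde v,\tilde w)$, processing edges in a fixed topological/lexicographic order to guarantee termination in $O((Sd)^2)$ steps; then handle the remaining constant edges into the sink $(t,d)$ by removing their tail vertices $(u,d)$ and folding the weights into the incoming edges, which is safe because $d\geq 1$ ensures $(u,d)\neq(s,0)$ and no constant edges enter such a vertex. This elimination preserves the computed polynomial, preserves homogeneity, and only decreases the vertex count, so the $S(d+1)$ bound survives. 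Your proof needs an argument of this kind; as written, it proves a statement about a relaxed ABP model rather than the paper's.
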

\begin{proof}
The following construction is fairly standard.  Let $s$ and $t$ be the
designated source and sink, respectively, of the ABP. We will use the
notation of Section \ref{section_noncomm_abps_intro}.

We now define $\tA$. $\tA$ will use the variables $X\union \tilde{Y}$, where
$\tilde{Y} =
\setcond{y_i^{(j)}}{1\leq i\leq m, 2\leq j\leq d(h_i)}$.
The vertices of $\tA$ are tuples $(u,i)$, where
$u$ is a vertex of $A$ and $i\in\naturals$ is a number between $0$ and $d$.
The source of $\tA$ will be $(s,0)$ and the sink $(t,d)$.  
We will define the set of edges of $\tA$ in two stages. We will first
construct an ABP on the set of vertices of $\tA$ which will include
edges with weights from $\field$ (i.e, edges $e$ such that $L(e)$ is a
non-zero degree $0$ polynomial), and we will then show how to remove
these edges from the ABP. Consider any edge $e$ in the ABP $A$; let
the label $L(e)$ of $e$ be $\sum_{i=1}^{n}\alpha_ix_i +
\sum_{j=1}^{m}\beta_jy_j$ and
$0\leq k\leq d$, define the linear form $L(e)_k$ -- which captures the
$k$th homogeneous part of $L'(e)$, the polynomial computed by edge $e$
-- as follows:
\begin{itemize}
\item If $k = 0$, define $L(e)_k$ to be the field element
$\sum_{j=1}^{m}\beta_jh_j^{(0)}$. 
\item If $k = 1$, define $L(e)_k$ to be $\sum_{i=1}^{n}\alpha_ix_i +
\sum_{j=1}^{m}\beta_jh_j^{(1)}$.
\item If $k > 1$, define $L(e)_k$ to be
$\sum_{j=1}^{m}\beta_jy_j^{(k)}$
\end{itemize}

Fix any vertex $(v,k)$ of $\tA$. Let $\bra{u_1,u_2,\ldots,u_l}$ be the
predecessors of $v$ in $A$ and let $e_i$ denote the edge $(u_i,v)$.
Then, it is easy to see that 
\[f_{s,v}^{(k)} = \sum_{i=1}^{l}\sum_{j=0}^{k}
f_{s,u_i}^{(j)}L'(e_i)_{(k-j)}\]

Hence, we define edges $e_{i,j}$ in $\tA$ from vertices $(u_i,j)$
to $(v,k)$ with label $L(e_{i,j}) = L(e_i)_{k-j}$. (Note that the
label $L(e_{i,k})$ is just a field element. We will change this
presently.) This concludes the first stage. Note that, since we only
add edges from $(u,i)$ to $(v,j)$ when $(u,v)$ is an edge in $A$,
the graph of $\tA$ is acyclic. Also note that an edge $e$ is labeled by
a field element if and only if it connects vertices of the form
$(u,k)$ and $(v,k)$, for some $u$, $v$, and $k$. Finally, it is easily
seen from the definition of $\tA$ that the polynomial computed from
$(s,0)$ to $(u,i)$ is the polynomial $f_{s,u}^{(i)}$ for any $s,u,$
and $i$.

In the second stage, we will get rid of those edges in $\tA$ such that
$L(e)\in\field$. We do this in two passes. Fix some topological
ordering of the vertices of $\tA$, and order the edges
$(\tilde{u},\tilde{v})$ of $\tA$ lexicographically. As long as there
is an edge $e = (\tilde{u},\tilde{v})$ of $\tA$ such that $\tilde{v}$
is \emph{not} the designated sink $(t,d)$ and $L(e)\in\field$, we let
$e$ be the least such edge and do the following: we remove the edge
$e$, and for each edge $e' = (\tilde{v},\tilde{w})$ of $\tA$ going out
of $v$, we change the label of the edge $e'' = (\tilde{u}, \tilde{w})$
to $L(e'') + L(e)\cdot L(e')$ (if no such edge $e''$ exists, we add
this edge to the ABP and give it the label $L(e)\cdot L(e')$). It
should be clear that the homogeneity of the ABP is preserved. After at
most $O((sd)^2)$ many such modifications, all edges in $\tA$ that are
labeled by field elements are of the form $(\tilde{u},(t,d))$.
Moreover, by the above construction, it is clear that $\tilde{u} =
(u,d)$ for some vertex $u\neq t$ of $A$. Since $d\geq 1$, we know that
$\tilde{u}\neq (s,0)$, the designated source node. We also know that
there are no edges into $\tilde{u}$ which are labeled by a field
element. We now do the following: for each edge $e =
(\tilde{u},(t,d))$ labeled by a field element, we remove the vertex
$\tilde{u}$ and for each edge $e' = (\tilde{v},\tilde{u})$, we remove
$e'$ and change the label of $e'' = (\tilde{v},(t,d))$ to $L(e'') +
L(e')\cdot L(e)$ (if no such $e''$ exists, we add such an edge $e''$
and set its label to $L(e')\cdot L(e)$). This concludes the
construction.

It is easy to prove inductively that after every modification of
$\tA$, the polynomial computed from $(s,0)$ to $(t,d)$ remains
$f_{s,t}^{(d)}$.  Hence, the ABP $\tA$ computes exactly the polynomial
$f$ computed by $A$.  Also, by construction, the edges of $\tA$ are
all homogeneously labeled; finally, it can also be seen that given a
path $\gamma$ from vertex $(u,i)$ to vertex $(v,j)$ in $\tA$,
$d(\gamma) = j-i$: hence, the ABP is indeed homogeneous, and we are
done.  
\end{proof}

\section{Decomposition of Communication Matrices}

We now generalize the key lemma of Nisan \cite{N} that connects the
size of noncommutative ABPs for an $f\in\fx$ to the ranks of certain
communication matrices $M_k(f)$. The generalization is for
noncommutative ABPs with help polynomials, and it gives a more
complicated connection between the size of ABPs to the ranks of
certain matrices. For usual noncommutative ABPs considered in
\cite{N}, Nisan's lemma directly yields the lower bounds. In our case,
this generalization allows us to formulate the lower bound problem as
a Remote Point Problem for the rank metric.  

We will assume that the explicit polynomial for which we will be
proving lower bounds is homogeneous. Thus, by Theorem
\ref{thm_homogen} we can assume that each help polynomial in
$H=\bra{h_1,h_2,\ldots,h_m}$ is homogeneous and of degree at least
$2$.

We first fix some notation. Let $d\in\naturals$ be an even number. Let
$d(H) = \max_{h\in H}d(h)$. Also, for $2\leq i\leq d(H)$, let $H_i =
\setcond{h\in H}{d(h) = i}$.

Suppose $f\in\fx$ is homogeneous of even degree $d\geq 2$, and
$k\in\naturals$ such that $0\leq k \leq d$. We define the $n^k\times
n^{d-k}$ matrix $M_k(f)$ (as in \cite{N}): Each row is labeled by a
distinct monomial in $\mono{k}{X}$ and each column by a distinct
monomial in $\mono{d-k}{X}$. Given monomials $m_1\in\mono{k}{X}$ and
$m_2\in\mono{d-k}{X}$, the $(m_1,m_2)$th entry of $M_k(f)$ is the
coefficient of the monomial $m_1m_2$ in $f$ and is denoted by
$M_k(f)(m_1,m_2)$.

Call $M$ an \emph{$(l,m)$-matrix} if $M$ is an $n^l\times n^m$ matrix
with entries from $\field$, where the rows of $M$ are labeled by
monomials in $\mono{l}{X}$ and columns by monomials in
$\mono{m}{X}$. Suppose $0\leq l\leq k$ and $0\leq m \leq d-k$. Let
$M_1$ be an $(l,m)$-matrix and $M_2$ a $(k-l,(d-k)-m)$-matrix. We
define the $(k,d-k)$-matrix $M = M_1\otimes_{l,m}^k M_2$ as follows:
Suppose $m_1\in\mono{k}{X}$ and $m_2\in\mono{d-k}{X}$ are monomials
such that $m_1 = m_{11}m_{12}$ with $m_{11}\in\mono{k-l}{X}$ and
$m_{12}\in\mono{l}{X}$ and $m_2 = m_{21}m_{22}$ with
$m_{21}\in\mono{m}{X}$ and $m_{22}\in\mono{(d-k)-m}{X}$. Then the
$(m_1,m_2)^{th}$ entry of $M$ is defined as 
\[
M(m_1,m_2) = M_1(m_{12},m_{21})\cdot M_2(m_{11},m_{22}).
\]

Let $A$ be a homogeneous ABP with help polynomials $H$ computing a
polynomial $f$ of degree $d$. Let $u,v$ and $w$ be vertices in the ABP
$A$, and $\gamma_1\in\p_{u,v}$ and $\gamma_2\in\p_{v,w}$ be paths.  We
denote by $\gamma_1\circ\gamma_2\in\p_{u,w}$ the concatenation of
$\gamma_1$ and $\gamma_2$. 
%
%More precisely, if $\gamma_1 =
%(e_1,e_2,\ldots,e_{t_1})$, and $\gamma_2 =
%(e_{t_1+1},e_{t_1+2},\ldots,e_{t_2})$ then $\gamma_1\circ\gamma_2 =
%(e_1,e_2,\ldots,e_{t_2})$. Note that, $f_{\gamma_1\circ\gamma_2} =
%f_{\gamma_1}\cdot f_{\gamma_2}$.

Since $A$ is homogeneous, each of the polynomials $f_{u,v}$ for
vertices $u, v$ of $A$ is homogeneous. For $1\leq k\leq d/2$, define
the \emph{$k$-cut} of $A$, $C_k\subseteq V(A)\union E(A)$, as follows:
A vertex $v\in V(A)$ is in $C_k$ iff $d(f_{s,v}) = k$, and an edge $e
= (u,v)\in E(A)$ is in $C_k$ iff $d(f_{s,u}) < k$ and
$d(f_{s,v})>k$. For each $x\in C_k$, let $\p_x$ denote the set of
$s$-$t$ paths passing through $x$. Clearly, the sets
$\setcond{\p_x}{x\in C_k}$ partition $\p_{s,t}$, the set of all paths
from $s$ to $t$. Thus, we have
\begin{align}\label{eq_1}
f &~= \sum_{x\in C_k}\sum_{\gamma\in\p_x} f_\gamma\notag\\
  &~= \sum_{v\in C_k\intersect V(A)}\sum_{\gamma\in\p_v} f_\gamma ~+ 
     \sum_{e\in C_k\intersect E(A)}\sum_{\gamma\in\p_e}
     f_\gamma.\notag\\
\end{align}

We now analyze Equation \ref{eq_1}. For $v\in C_k\intersect V(A)$,
$\p_v = \setcond{\gamma_1\circ\gamma_2}{\gamma_1\in\p_{s,v},
  \gamma_2\in\p_{v,t}}$. Hence, for any $v\in C_k\intersect V(A)$:
\begin{align}\label{eq_2}
\sum_{\gamma\in\p_v} f_\gamma &= \sum_{\substack{\gamma_1\in\p_{s,v}\\
\gamma_2\in\p_{v,t}}} f_{\gamma_1\circ\gamma_2}
= \sum_{\substack{\gamma_1\in\p_{s,v}\\
\gamma_2\in\p_{v,t}}} f_{\gamma_1}\cdot f_{\gamma_2}\notag\\
&= f_{s,v}f_{v,t}.\notag\\
\end{align}

Similarly, for any edge $e=(u,v)\in C_k\intersect E(A)$, $\p_e =
\setcond{\gamma_1\circ (e)\circ \gamma_2}{\gamma_1\in\p_{s,u},
\gamma_2\in\p_{v,t}}$, where $(e)$ denotes the path containing just
the edge $e$. Thus,
\begin{align}\label{eq_3}
\sum_{\gamma\in\p_e} f_\gamma &= \sum_{\substack{\gamma_1\in\p_{s,u}\\
\gamma_2\in\p_{v,t}}} f_{\gamma_1\circ (e)\circ \gamma_2}\\
&= \sum_{\substack{\gamma_1\in\p_{s,v}\notag\\
\gamma_2\in\p_{v,t}}} f_{\gamma_1}\cdot L'(e)\cdot f_{\gamma_2}\\
&=f_{s,u}L'(e)f_{v,t}.\notag\\
\end{align}

{From} Equations \ref{eq_1}, \ref{eq_2}, and \ref{eq_3}, we get
\begin{align*}
f = \sum_{v\in C_k\intersect V(A)}f_{s,v}f_{v,t} ~+\\ 
\sum_{e=(u,v)\in C_k\intersect E(A)}f_{s,u}L'(e)f_{v,t}. 
\end{align*}
As $A$ is homogeneous of degree $d$, each polynomial in the sums above
is homogeneous of degree $d$. Hence
\begin{align}\label{eq_4}
M_k(f) = &\sum_{v\in C_k\intersect V(A)}M_k(f_{s,v}f_{v,t})\ +\notag\\
	   &\sum_{e=(u,v)\in C_k\intersect
		 E(A)}M_k(f_{s,u}L'(e)f_{v,t}).\notag\\
\end{align}

For any $v\in C_k\intersect V(A)$, $f_{s,v}$ and $f_{v,t}$ are
homogeneous degree $k$ and $d-k$ polynomials respectively. We denote
by $M_v$ the matrix $M_k(f_{s,v}f_{v,t})$.  Notice that for
$m_1\in\mono{k}{X}$ and $m_2\in\mono{d-k}{X}$, the $(m_1,m_2)^{th}$
entry of the matrix $M_v=M_k(f_{s,v}f_{v,t})$ is
$f_{s,v}(m_1)f_{v,t}(m_2)$. Thus, $M_v$ is an outer product of two
column vectors and is hence a matrix of rank at most $1$. Therefore,
the first summation in Equation~\ref{eq_4} is a matrix of rank at most
$|C_k\intersect V(A)|$.  

%(this is precisely the result proved in \cite{N} for standard ABPs).

For $e=(u,v)\in C_k\intersect E(A)$, we know that $d(f_{s,u})<k$ and
$d(f_{s,v})>k$ and thus, $d(e)\geq 2$. Hence, $L'(e) = \sum_{h\in
  H_{d(e)}}\beta_{e,h} h$, for $\beta_{e,h}\in\field$. Therefore,
expanding the second summation in Equation \ref{eq_4}, we get
\begin{align}
\sum_{\substack{e=(u,v)\in\\ C_k\intersect
E(A)}}&M_k(f_{s,u}L'(e)f_{v,t}) =\notag\\
&\sum_{\substack{e=(u,v)\in\\ C_k\intersect E(A)}}\sum_{h\in H_{d(e)}}
\beta_{e,h}M_k(f_{s,u}\cdot h\cdot f_{v,t})\notag\\
\label{eq_5}
\end{align}

Consider a term of the form $M_k(f_{s,u}hf_{v,t})$. For the rest of
the proof let $d(w)$ denote $d(f_{s,w})$, for any vertex $w$ of
$A$. Given monomials $m_{11}\in\mono{d(u)}{X}$,
$m_{12}\in\mono{k-d(u)}{X}$, $m_{21}\in\mono{d(h)-(k-d(u))}{X}$, and
$m_{22}\in\mono{d-d(v)}{X}$, the entry
$M_k(f_{s,u}hf_{v,t})(m_{11}m_{12},m_{21}m_{22}) = 
h(m_{12}m_{21})f_{s,u}(m_{11})f_{v,t}(m_{22})$,
since all polynomials involved are homogeneous. Hence, the matrix
$M_k(f_{s,u}hf_{v,t})$ is precisely
$M_{k-d(u)}(h)\otimes_{k-d(u),d(h)-(k-d(u))}^kM_e$, where
$M_e(m_{11},m_{22}) = f_{s,u}(m_{11})f_{v,t}(m_{22})$, for
$m_{11}\in\mono{d(u)}{X}, m_{22}\in\mono{d-d(v)}{X}$. Clearly, $M_e$
is a matrix of rank at most $1$, for any $e\in C_k\intersect E(A)$ and
$h\in H_{d(e)}$. Continuing with the above calculation, we get
\begin{align*}
&\sum_{\substack{e=(u,v)\in\\ C_k\intersect
E(A)}}M_k(f_{s,u}L'(e)f_{v,t})\\
&=\sum_{\substack{e=(u,v)\in\\ C_k\intersect E(A)}}\sum_{h\in H_{d(e)}}
\beta_{e,h}M_{l_e}(h)\otimes_{l_e,m_e}^kM_e\\
&= \sum_{h\in H}\sum_{i=d_1(h)}^{d_2(h)} M_i(h)\otimes_{i,d(h)-i}^k
\cdot \sum_{\substack{e=(u,v)\in C_k:\\
d(e)=d(h)\\d(u) = k-i}}\beta_{e,h}M_e,
\end{align*}
where $d_1(h) = \max\bra{1,d(h)-(d-k)}$, $d_2(h) =
\min\bra{d(h)-1,k}$, $l_e=k-d(u)$, and $m_e = d(h)-(k-d(u))$.
%Explain above line?

Plugging the above observations into Equation \ref{eq_4}, we have
\begin{align*}
&M_k(f) = \underbrace{\left(\sum_{v\in C_k\intersect
V(A)}M_v\right)}_{M'} +\\ 
&\sum_{h\in H}\sum_{i=d_1(h)}^{d_2(h)} M_i(h)\otimes_{i,d(h)-i}^k
\underbrace{\left(\sum_{\substack{e=(u,v)\in C_k:\\
	 d(e) = d(h)\\d(u) = k-i}}\beta_{e,h}M_e\right)}_{M'_{i,h}}
\end{align*}

Notice that $M'$ above has rank at most $|V(A)|$, and $M'_{i,h}$ has
rank at most $|E(A)|\leq |V(A)|^2$ for any $h\in H$ and $d_1(h)\leq i\leq
d_2(h)$. Hence, we have proved the following result:

\begin{theorem}
\label{thm_decomp}
Let $A$ be a homogeneous ABP of size $S$ computing a (homogeneous)
polynomial $f$ of degree $d$ using the help polynomials $H$. Then, for
any $k\in \bra{0,1,\ldots,d}$, we can write $M_k(f)$ as:
\begin{align*}
&M_k(f) = M' +
&\sum_{h\in
  H}\sum_{i=d_1(h)}^{d_2(h)}M_i(h)\otimes_{i,d(h)-i}^k
M'_{i,h},
\end{align*}
where $d_1(h) = \max\bra{1,d(h)-(d-k)}$ and $d_2(h) =
\min\bra{d(h)-1,k}$ such that $\rank{M'}\leq S$ and
$\rank{M'_{i,h}}\leq S^2$ for each $h\in H$, and
$i\in\bra{\max\bra{1,d(h)-(d-k)}\ldots,\min\bra{d(h)-1,k}}$.
\end{theorem}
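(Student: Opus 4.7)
The plan is to use a cut-based decomposition of $A$ at depth $k$, in the spirit of Nisan's argument but generalized to accommodate help polynomials that may straddle the cut. Since $A$ is homogeneous of degree $d$, each $f_{s,v}$ has a well-defined degree $d(v)$, so I would define the $k$-cut $C_k$ to be the disjoint union of the vertex set $\setcond{v\in V(A)}{d(v)=k}$ with the set of edges $(u,v)\in E(A)$ satisfying $d(u)<k<d(v)$. Every $s$-$t$ path meets $C_k$ in exactly one element, so I can partition $\p_{s,t}$ accordingly and write
\[
f \;=\; \sum_{v\in C_k\cap V(A)} f_{s,v}\,f_{v,t} \;+\; \sum_{e=(u,v)\in C_k\cap E(A)} f_{s,u}\,L'(e)\,f_{v,t}.
\]

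Next I would apply $M_k(\cdot)$ to both sides and analyze each contribution. For the vertex term, homogeneity implies the $(m_1,m_2)$ entry of $M_k(f_{s,v}f_{v,t})$ equals $f_{s,v}(m_1)\,f_{v,t}(m_2)$, so this matrix is the outer product of two column vectors and hence of rank at most $1$; summing over $v\in C_k\cap V(A)$ yields a matrix $M'$ of rank at most $|V(A)|\leq S$. For the edge term, since $e=(u,v)\in C_k$ forces $d(e)\geq 2$, only help polynomials of degree $d(e)$ contribute to $L'(e)$, and I may expand $L'(e)=\sum_{h\in H_{d(e)}}\beta_{e,h}h$. A direct coefficient computation then shows
\[
M_k(f_{s,u}\,h\,f_{v,t}) \;=\; M_{k-d(u)}(h)\otimes_{k-d(u),\,d(h)-(k-d(u))}^{k} M_e,
\]
where $M_e(m_{11},m_{22}) = f_{s,u}(m_{11})f_{v,t}(m_{22})$ has rank at most $1$.

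To reach the stated form I would reindex the resulting sum by $h\in H$ and by $i := k-d(u)$, collecting all edges $e=(u,v)\in C_k$ with $d(e)=d(h)$ and $d(u)=k-i$ into one matrix
\[
M'_{i,h} \;=\; \sum_{\substack{e=(u,v)\in C_k\\ d(e)=d(h),\ d(u)=k-i}} \beta_{e,h}\,M_e,
\]
whose rank is at most the number of summands, and thus at most $|E(A)|\leq S^2$. The allowable range of $i$ follows by translating the strict inequalities $0\leq d(u)<k<d(v)\leq d$ together with $d(v)-d(u)=d(h)$ into the bounds $\max\{1,d(h)-(d-k)\}\leq i\leq \min\{d(h)-1,k\}$, which are exactly $d_1(h)$ and $d_2(h)$. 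Substituting everything back gives the decomposition in the statement.

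The main obstacle — really the only subtle point — is bookkeeping: checking that the reindexing over $(h,i)$ is well-defined and that the bounds $d_1(h),d_2(h)$ exactly capture the edges contributing to $M'_{i,h}$. Once that is pinned down, the rank bounds on $M'$ and $M'_{i,h}$ are immediate from the rank-$1$ structure of each $M_v$ and $M_e$, and the proof is complete.
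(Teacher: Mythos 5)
Your proposal follows exactly the paper's argument: define the $k$-cut $C_k$, partition $s$-$t$ paths accordingly to write $f = \sum_{v\in C_k\cap V(A)} f_{s,v}f_{v,t} + \sum_{e\in C_k\cap E(A)} f_{s,u}L'(e)f_{v,t}$, apply $M_k(\cdot)$, exploit the rank-one structure of $M_v$ and $M_e$, and reindex the edge terms by $(h,i)$ with $i=k-d(u)$ to obtain the stated range $[d_1(h),d_2(h)]$ and the rank bounds $S$ and $S^2$. This matches the paper's proof in both structure and every substantive step.
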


\section{Remote Point Problem for the rank metric}
\label{section_rmp}

We now introduce an algorithmic problem that will help us prove lower
bounds on the sizes of ABPs computing explicit polynomials using a
(given) set of help polynomials $H$. This problem is actually the
\emph{Remote Point Problem for matrices in the rank metric} that we
denote RMP. This problem is analogous to the Remote Point
Problem (RPP), which we discussed in Section~\ref{section_rpp}.

Given two matrices $P,Q\in\field^{a\times b}$, the \emph{Rank
  distance} between $P$ and $Q$ is defined to be $\rank{P-Q}$. It is
known that this defines a metric, known as the \emph{rank metric} on
the set of all $a\times b$ matrices over $\field$.

\noindent\textbf{The RMP problem.}~~ Given as input a set of
$N\times N$ matrices $P_1,P_2,\ldots,P_k$ over a field $\field$ and
$r\in\naturals$, the problem is to compute an $N\times N$ matrix $P$
such that for any matrix $P'=\sum_{i=1}^k \alpha_iP_i$ in the subspace
generated by $P_1,P_2,\ldots,P_k$, the rank distance between $P$ and
$P'$ is at least $r$.

In the problem $N$ is taken as the input size, and $k$ and $r$ are
usually functions of $N$. We say that the RMP problem has an 
$(N,k,r)$-solution over $\field$ if there is a \emph{deterministic}
algorithm that runs in time polynomial in $N$ and computes a matrix
$P$ that is at rank distance at least $r$ from the subspace generated by the
$P_1,P_2,\ldots,P_k$.

\begin{remark}\label{rem1}
How does a solution to RMP give us an explicit noncommutative
polynomial $f$ for which we can show lower bounds for the sizes of
noncommutative ABPs with help polynomials? We now explain the
connection.

Let $A$ be a homogeneous ABP of size $S$ computing a polynomial
$f$ of degree $d$. Let $d_1(h)$ denote $\max\bra{1,d(h)-d/2}$ and
$d_2(h)$ denote $\min\bra{d/2,d(h)-1}$. For $a,b,p,q\in\naturals$ such
that $p\in [n^a]$ and $q\in [n^b]$, let $E_{a,b}^{p,q}$ be the $n^a\times
n^b$ elementary matrix with $1$ as $(p,q)$th entry, and $0$
elsewhere. The matrices $\setcond{E_{a,b}^{p,q}}{p\in [n^a],q\in
  [n^b]}$ span all matrices in $\field^{n^a\times n^b}$.
By Theorem \ref{thm_decomp}
\[
M_{d/2}(f) = M' + \sum_{h\in H}
\sum_{i=d_1(h)}^{d_2(h)}M_i(h)\otimes_{i,d(h)-i}^{d/2} M'_{i,h},
\]
where $\rank{M'}\leq S$. For $h\in H$ and
$i\in\bra{d_1(h),\ldots,d_2(h)}$, the matrix $M'_{i,h}$ is an
$n^{d/2-i}\times n^{d/2-d(h)+i}$ dimension matrix. We can write
$M'_{i,h}$ as a linear combination of the elementary matrices
in $\{E_{d/2-i,d/2-d(h)+i}^{p,q}\mid p\in [n^{d/2-i}], q\in
  [n^{d/2-d(h)+i}]\}$.

Let $\mathcal{A}$ be the set of matrices of the form
$M_i(h)\otimes_{i,d(h)-i}^{d/2}E_{d/2-i,d/2-d(h)+i}^{p,q}$, where
$h\in H$, $i\in\bra{d_1(h),\ldots,d_2(h)}$, and $p\in
[n^{d/2-i}]$, $q\in [n^{d/2-d(h)+i}]$. Each matrix in $\mathcal{A}$ is
an $n^{d/2}\times n^{d/2}$ matrix, with its rows and columns labeled
by monomials in $\mono{d/2}{X}$. Every matrix of the form
$M_i(h)\otimes_{i,d(h)-i}^{d/2} M'_{i,h}$ is a linear combination of
matrices in $\mathcal{A}$. Crucially, note that $\mathcal{A}$ depends
only on the set of help polynomials and the parameter $d$, and it
\emph{does not depend} on the ABP $A$.

By substitution for $M'_{i,h}$ we obtain the following expression for
$M_{d/2}(f)$ in terms of linear combination of matrices in
$\mathcal{A}$.
\begin{equation*}
M_{d/2}(f) = M' + \sum_{M\in \mathcal{A}}\alpha_M M,
\end{equation*}
where $\alpha_M\in\field$. Since, $M'$ has rank at most $S$, it
implies that $M_{d/2}(f)$ is at rank distance at most $S$ from the
subspace generated by the matrices in $\mathcal{A}$. Thus, if we can
compute a matrix $\hat{M}$ in deterministic time polynomial in $n^d$
that has rank distance $S=2^{O(n)}$ from the subspace generated by
$\mathcal{A}$ we would obtain an explicit homogeneous degree $d$
polynomial $f$ with lower bound $2^{\Omega(n)}$ by setting
$\hat{M}=M_{d/2}(f)$. This is the approach that we will take 
for proving lower bounds.
\end{remark}

We present the following simple algorithm, which suffices for our
lower bound application. 

\begin{theorem}
\label{thm_rmp_simple}
For any $k$, the RMP has an $(N,k,\lfloor N/k+1\rfloor)$-solution over
any field $\field$ such that field operations in $\field$ and Gaussian
elimination over $\field$ can be performed in polynomial time.
\end{theorem}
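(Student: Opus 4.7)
The plan is to construct $P$ to have nonzero entries in only its first $r = \lfloor N/(k+1)\rfloor$ columns, and to choose those columns cleverly so that the corresponding $r$ columns of $P-P'$ are linearly independent for every $P' \in V:=\mathrm{span}(P_1,\ldots,P_k)$. This immediately yields $\rank{P-P'}\geq r$.

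For each column index $j\in[r]$, let $W_j\subseteq\field^N$ be the subspace spanned by the $j$th columns of $P_1,\ldots,P_k$; each $W_j$ has dimension at most $k$. For any $\alpha\in\field^k$, the $l$th column of $P-\sum_i \alpha_i P_i$ (for $l\leq r$) has the form $q^{(l)} - w_l$ with $w_l\in W_l$, where $q^{(l)}$ denotes the $l$th column of $P$. The key observation is that if $q^{(1)},\ldots,q^{(r)}$ are linearly independent modulo $W:=W_1+\cdots+W_r$, then for \emph{every} choice of $(w_1,\ldots,w_r)\in W_1\times\cdots\times W_r$ the vectors $q^{(l)}-w_l$ are linearly independent: indeed, if $\sum_l\beta_l(q^{(l)}-w_l)=0$, then $\sum_l\beta_l q^{(l)} = \sum_l\beta_l w_l \in W$, forcing $\beta=0$ by the independence in $\field^N/W$.

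It then remains to argue that such $q^{(l)}$'s exist and can be produced in polynomial time. The dimension count $\dim W \leq \sum_{l=1}^r\dim W_l \leq rk$, together with $r\leq N/(k+1)$, gives $\dim W \leq N-r$, hence $\dim(\field^N/W)\geq r$; any lift of a basis of an $r$-dimensional subspace of $\field^N/W$ provides valid $q^{(l)}$'s. Computing bases of the $W_l$'s, of $W$, and extending a basis of $W$ to one of a space of dimension $\dim W + r$ in $\field^N$ are all routine Gaussian-elimination tasks running in time polynomial in $N$ (and $k$, which we may assume is at most $N^2$). These directly yield $q^{(1)},\ldots,q^{(r)}$, and hence the required matrix $P$.

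I do not foresee a real obstacle: the conceptual step is the reduction to the ``independence modulo $W$'' condition in the second paragraph, after which the bound $r(k+1)\leq N$ drops out of an elementary dimension count, matching $r=\lfloor N/(k+1)\rfloor$ tightly for an approach that confines the nonzero columns of $P$ to a fixed set of $r$ indices. Improving the parameters would require a more global argument that exploits the full $N^2$ degrees of freedom in $P$, rather than freezing most columns to zero.
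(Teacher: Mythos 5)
Your proposal is correct and follows essentially the same approach as the paper: both fix the first $r=\lfloor N/(k+1)\rfloor$ columns of $P$ to be vectors that are linearly independent modulo the span $W$ of the first $r$ columns of all the $P_i$, and use the dimension count $rk \leq N - r$ to guarantee such vectors exist and can be found by Gaussian elimination. Your write-up makes explicit the small verification (that independence of the $q^{(l)}$ modulo $W$ forces independence of $q^{(l)}-w_l$ for any $w_l \in W_l$) that the paper leaves as ``not too difficult to see.''
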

\begin{proof}
We assume that $k<N$; otherwise the problem is trivial. Let $r$ denote
$\lfloor N/k+1 \rfloor$. Choose the first $r$ column vectors in each
of the matrices $P_1,P_2,\ldots,P_k$. Let
$v_1,v_2,\ldots,v_{rk}\in\field^N$ be
these vectors in some order. As $rk\leq N - r$, using Gaussian
elimination, we can efficiently choose
$v_{rk+1},v_{rk+2},\ldots,v_{r(k+1)}\in\field^N$ with the following
property: for every $i\in [k+1]$, $v_{rk+i}$ is linearly independent
of $v_1,v_2,\ldots,v_{rk+(i-1)}$. Let $P$ be any matrix that has
$v_{rk+1},v_{rk+2},\ldots,v_{r(k+1)}$ as its first $r$ columns. It is
not too difficult to see that given any matrix $P'$ in the subspace
generated by $P_1,P_2,\ldots,P_k$, the first $r$ columns of $P-P'$
remain independent, i.e $\rank{P-P'}\geq r$.
\end{proof}

\begin{remark}
	The Remote Point Problem is fascinating as an algorithmic question.
	In \cite{APY} Alon et al provide a nontrivial algorithm for RPP in
	the Hamming metric (over $\F_2$). We use similar methods to provide
	an improved solution to RMP for small prime fields. The result is
	proved in Section \ref{section_rmp_improv}. Unfortunately, the
	improvement in parameters over the trivial solution above is not
	enough to translate into an appreciably better lower bound.
\end{remark}

\section{Lower bounds for ABPs with Help Polynomials}

%The problem statement not explained well enough
In this section, we prove some lower bounds for ABPs computing some
explicit polynomials using a set of given help polynomials $H$. Here,
`explicit' means that the coefficients of the polynomial can be
written down in time polynomial in the number of coefficients of
the input (the help polynomials $H$) and the output (the hard to
compute polynomial).

Throughout this section, $\field$ will be a field over which field
operations and Gaussian elimination can be performed efficiently. Let
the set of help polynomials be $H=\bra{h_1,h_2,\ldots,h_m}$; let
$d(H)=\max_{h\in H}d(h)$. 

We will first consider the case of homogeneous ABPs using the help
polynomials $H$; $H$ is, in this case, assumed to be a set of homogeneous
polynomials. We will then derive a lower bound for general ABPs and a
general set of help polynomials using Theorem \ref{thm_homogen}.

\subsection{The homogeneous case}
\label{subsec_homogen}

Let $H$ be a set of homogeneous polynomials in this section. Our aim
is to produce, for any degree $d\in\naturals$, an explicit homogeneous
polynomial $F_d$ of degree $d$ that cannot be computed by homogeneous
ABPs. To avoid some trivialities, we will assume that $d$ is even. 

We first observe that, to compute homogeneous polynomials of degree
$d$, a homogeneous ABP cannot meaningfully use help polynomials of
degree greater than $d$:

\begin{lemma}
\label{lemma_high_deg_no_use}
Let $A$ be a homogeneous ABP using the help polynomials $H$ to compute
a polynomial $f$ of degree $d$. Then, there is a homogeneous ABP $A'$,
of size at most the size of $A$, such that $A'$ computes $f$ and
furthermore, for every edge $e\in E(A')$, $d(e)\leq d$.
\end{lemma}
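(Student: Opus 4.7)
The plan is to exploit the homogeneity of $A$ together with the fact that $f = f_{s,t}$ has degree $d$ to conclude that no edge of degree strictly greater than $d$ can lie on an $s$-$t$ path, and then simply delete such edges.

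First I would argue the structural observation. Since $A$ is homogeneous, any two paths from $u$ to $v$ share a common degree, which I will call $d(u,v)$. Applied to $u=s,v=t$, all $s$-$t$ paths have the same degree $d(s,t)$, and since $A$ computes $f$ of degree $d\ge 1$ and the path polynomials all have degree $d(s,t)$, we must have $d(s,t)=d$. Now consider any edge $e=(u,v)\in E(A)$ that lies on at least one $s$-$t$ path $\gamma=\gamma_1\circ(e)\circ\gamma_2$ with $\gamma_1\in\p_{s,u}$ and $\gamma_2\in\p_{v,t}$. Then $d=d(\gamma)=d(\gamma_1)+d(e)+d(\gamma_2)$, and since all edge degrees in a homogeneous ABP are nonnegative integers, this forces $d(e)\le d$. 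Contrapositively, every edge $e$ with $d(e)>d$ lies on no $s$-$t$ path whatsoever.

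Next I would construct $A'$ by deleting from $A$ every edge $e$ with $d(e)>d$ (we may optionally also prune vertices that become disconnected from $s$ or $t$, but this is not necessary). Since each deleted edge was on no $s$-$t$ path, the set $\p_{s,t}$ is unchanged, so $f_{s,t}$ and hence the polynomial computed by $A'$ remains $f$. The vertex set of $A'$ is contained in that of $A$, so the size bound $|V(A')|\le|V(A)|$ holds. The homogeneity conditions are preserved trivially: every surviving edge still has a homogeneous label, and for any two surviving paths $\gamma_1,\gamma_2\in\p_{u,v}$ (paths in $A'$ are paths in $A$), we still have $d(\gamma_1)=d(\gamma_2)$. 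By construction every edge of $A'$ satisfies $d(e)\le d$.

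There is no real obstacle here; the proof is essentially a one-line observation that a homogeneous ABP cannot ``waste'' degree on edges too large to fit under the global budget $d$. The only minor care needed is to confirm that removing edges does not accidentally break the homogeneity invariant for some pair $(u,v)$ where only some of the $u$-$v$ paths survive, and this is immediate because the invariant is about equality of degrees along paths, not about the existence of paths.
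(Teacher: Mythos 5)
Your proof is correct and matches the paper's approach exactly: both observe that homogeneity forces every $s$-$t$ path to have degree precisely $d$, so no edge with $d(e)>d$ can lie on such a path, and deleting these edges leaves the computed polynomial, the size bound, and homogeneity intact. You merely spell out the details the paper leaves implicit.
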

\begin{proof}
Simply take $A$ and throw away all edges $e\in E(A)$ such that $d(e)>
d$; call the resulting homogeneous ABP $A'$. Since $A$ is homogeneous,
no path from source to sink in $A$ can contain an edge $e$ that was
removed above. Hence, the polynomial computed remains the same.
\end{proof}

Hence, to prove a lower bound for an explicit homogeneous polynomial
of degree $d$, it suffices to prove a lower bound on the sizes of
ABPs computing this polynomial using the help polynomials
$H_{\leq d} = \setcond{h\in H}{d(h)\leq d}$. As above, let $d(H_{\leq d}) =
\max_{h\in H_{\leq d}}d(h)$.

We begin with a simple explicit lower bound. Call a homogeneous
polynomial $F\in\fx$ of degree $d$ \emph{$d$-full-rank} if
$\rank{M_{d/2}(F)} = n^{d/2}$. Full-rank polynomials are easily
constructed; here is a simple example of one:
$F(X) = \sum_{m\in\mono{d/2}{X}}m\cdot m$. It follows easily from
Nisan's result \cite{N} that, without any help polynomials,
homogeneous ABPs computing any $d$-full-rank polynomial are of size at
least $n^{d/2}$.

\begin{theorem}
\label{thm_low_deg_bound}
Assume that $d(H_{\leq d})\leq d(1-\epsilon)$, for a fixed constant $\epsilon>0$
and let $F\in\fx$ be a $d$-full-rank polynomial. Then, any homogeneous ABP $A$
computing $F$ has size at least $\left( n^{\frac{\epsilon
d}{4}}/\sqrt{2md}\right)$.
\end{theorem}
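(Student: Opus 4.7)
The plan is to invoke Theorem \ref{thm_decomp} at the middle cut $k = d/2$ and turn the resulting decomposition of $M_{d/2}(F)$ into a rank inequality that forces $S$ to be large, in the style of Nisan's argument. By Lemma \ref{lemma_high_deg_no_use} I may restrict attention to help polynomials in $H_{\leq d}$; by hypothesis every such $h$ then satisfies $d(h) \leq d(1-\epsilon)$, which is the only nontrivial input to the argument.

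First I apply Theorem \ref{thm_decomp} to the hypothesized ABP $A$ of size $S$ to write
\[
M_{d/2}(F) = M' + \sum_{h \in H_{\leq d}} \sum_{i = d_1(h)}^{d_2(h)} M_i(h) \otimes_{i, d(h)-i}^{d/2} M'_{i,h},
\]
with $\rank{M'} \leq S$ and $\rank{M'_{i,h}} \leq S^2$. Taking ranks on both sides and using that $F$ is $d$-full-rank yields $n^{d/2} \leq S + \sum_{h,i}\rank{M_i(h) \otimes_{i,d(h)-i}^{d/2} M'_{i,h}}$ by subadditivity.

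Next I bound each summand. The operation $\otimes_{l,m}^k$ is, after a permutation of the row block $(m_{11}, m_{12}) \mapsto (m_{12}, m_{11})$ and likewise for columns, exactly the ordinary Kronecker product of its two arguments, so its rank is multiplicative: $\rank{M_i(h) \otimes_{i,d(h)-i}^{d/2} M'_{i,h}} = \rank{M_i(h)} \cdot \rank{M'_{i,h}}$. Since $M_i(h)$ is an $n^i \times n^{d(h)-i}$ matrix, its rank is at most $n^{\min(i, d(h)-i)} \leq n^{d(h)/2} \leq n^{d(1-\epsilon)/2}$, and $\rank{M'_{i,h}} \leq S^2$ by Theorem \ref{thm_decomp}. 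For each $h$ the index $i$ ranges over an interval of length at most $d$, so the double sum has at most $md$ terms, giving
\[
n^{d/2} \leq S + md \cdot n^{d(1-\epsilon)/2} \cdot S^2.
\]

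Finally, if $S \geq n^{d/2}/2$ then the claimed bound $S \geq n^{\epsilon d/4}/\sqrt{2md}$ is trivial; otherwise the second summand on the right must be at least $n^{d/2}/2$, which rearranges to $S^2 \geq n^{\epsilon d/2}/(2md)$ and hence $S \geq n^{\epsilon d/4}/\sqrt{2md}$, as required. The only real subtlety is identifying $\otimes_{l,m}^k$ with an ordinary Kronecker product so as to obtain multiplicativity of rank; everything else reduces to bookkeeping of degrees and dimensions, with the hypothesis $d(h) \leq d(1-\epsilon)$ entering exactly where the exponent $d(h)/2$ is replaced by $d(1-\epsilon)/2$, creating the gap between the two sides of the inequality.
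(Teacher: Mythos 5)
Your proof is correct and follows essentially the same route as the paper: apply Theorem~\ref{thm_decomp} at $k=d/2$, use subadditivity of rank, bound $\rank{M_i(h)}\leq n^{d(h)/2}\leq n^{(1-\epsilon)d/2}$ using the degree hypothesis, invoke multiplicativity of rank under $\otimes$, and count at most $md$ terms. The only cosmetic differences are that you justify multiplicativity by identifying $\otimes_{l,m}^k$ with a permuted Kronecker product (which the paper asserts directly as an inequality) and that you close the argument by a short case split on whether $S\geq n^{d/2}/2$, where the paper instead absorbs the additive $S$ into the $S^2$ term via $S\leq md\,n^{(1-\epsilon)d/2}S^2$.
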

\begin{proof}
Consider a homogeneous ABP $A$ computing $F$ using the help
polynomials $H$. By the above lemma, we may assume that $A$ uses only
the polynomials $H_{\leq d}$. Let $S$ denote the size of $A$. For any
$h\in H_{\leq d}$, let $d_1(h)$ denote $\max\bra{1,d(h)-d/2}$ and
$d_2(h)$ denote $\min\bra{d/2,d(h)-1}$. By Theorem \ref{thm_decomp}, we
know that
\[M_{d/2}(F) = M' + \sum_{h\in H_{\leq
d}}\sum_{i=d_1(h)}^{d_2(h)}M_i(h)\otimes_{i,d(h)-i}^{d/2} M'_{i,h}\]
where $\rank{M'}\leq S$ and $\rank{M'_{i,h}}\leq S^2$, for each $h\in
H_{\leq d}$ and $i\in\bra{d_1(h),\ldots,d_2(h)}$. For any $h$ and any $i$ such that $0\leq i\leq
d(h)$, $\rank{M_i(h)}\leq \min\bra{n^{i},n^{d(h)-i}}$, which is at
most $n^{d(h)/2}\leq n^{d(H_{\leq d})/2}$. By our assumption on
$d(H_{\leq d})$, we
see that $\rank{M_i(h)}\leq n^{(1-\epsilon)d/2}$. By the definition of
$\otimes_{i,d(h)-i}^{d/2}$, this implies that
$\rank{M_i(h)\otimes_{i,d(h)-i}^{d/2} M'_{i,h}}\leq
\rank{M_i(h)}\cdot\rank{M'_{i,h}}$, which is at most
$n^{(1-\epsilon)d/2}S^2$. Thus, we see that
\begin{align*}
\rank{M_{d/2}(F)}& \leq S +
\sum_{h\in H_{\leq d}}\sum_{i=d_1(h)}^{d_2(h)}n^{(1-\epsilon)d/2}S^2\\
&\leq S + |H_{\leq d}|dn^{(1-\epsilon)d/2}S^2\\
&\leq  2mdS^2n^{(1-\epsilon)d/2}
\end{align*}
As $F$ is $d$-full-rank, this implies that
\begin{align*}
2mdS^2n^{(1-\epsilon)d/2}&\geq n^{d/2}\\
\therefore\quad S &\geq \frac{n^{\frac{\epsilon d}{4}}}{\sqrt{2md}}
\end{align*}
\end{proof}

The above theorem tells us that as long as the help polynomials are
not too many in number ($m=n^{o(d)}$ will do), and of degree at most
$(1-\epsilon)d$, then any full rank polynomial remains hard to compute
for ABPs with these help polynomials.

We now consider the case when $d(H_{\leq d})$ can be as large as $d$.
In this case, we are unable to come up with an unconditional explicit
lower bound. A strong solution to the RMP introduced in Section
\ref{section_rmp} would give us such a bound. However, with the
suboptimal solution of Theorem \ref{thm_rmp_simple}, we are able to
come up with explicit lower bounds in a special case.  Let $\delta(H)$
denote $\min_{h\in H}d(h)$. By assuming some lower bounds on
$\delta(H)$, we are able to compute an explicit hard function.

\begin{theorem}
\label{thm_high_deg_bound}
Assume $\delta(H)\geq (\frac{1}{2}+\epsilon)d$, for a fixed
constant $\epsilon > 0$. Then, there exists an explicit homogeneous
polynomial $F\in\fx$ of degree $d$ such that any homogeneous ABP $A$
computing $F$ using the help polynomials $H$ has size at least
$\lfloor n^{\frac{\epsilon d}{2}}/2md\rfloor$.
\end{theorem}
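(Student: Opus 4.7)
The plan is to execute the strategy anticipated in Remark~\ref{rem1}: produce an explicit matrix $\hat M\in\field^{n^{d/2}\times n^{d/2}}$ at large rank distance from a fixed family $\mc{A}$ of matrices determined only by $H$ and $d$, and then take $F$ to be the unique homogeneous degree-$d$ polynomial satisfying $M_{d/2}(F)=\hat M$. This $F$ is well-defined because $M_{d/2}$ is a bijection between degree-$d$ homogeneous polynomials in $\fx$ and matrices in $\field^{n^{d/2}\times n^{d/2}}$, and it is explicit because the entries of $\hat M$ are computable in time polynomial in $n^d$ by the RMP algorithm.

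First I would assemble $\mc{A}$ exactly as in Remark~\ref{rem1}: for each $h\in H$, each $i\in\{d_1(h),\ldots,d_2(h)\}$, and each pair $(p,q)$ with $p\in[n^{d/2-i}]$ and $q\in[n^{d/2-d(h)+i}]$, include the matrix $M_i(h)\otimes^{d/2}_{i,d(h)-i}E_{d/2-i,d/2-d(h)+i}^{p,q}$. By Theorem~\ref{thm_decomp}, any homogeneous ABP of size $S$ computing $F$ forces $M_{d/2}(F)$ to lie within rank distance $S$ of $\mathrm{span}(\mc{A})$. The key quantitative step is bounding $k:=|\mc{A}|$ using the hypothesis $\delta(H)\geq(\tfrac12+\epsilon)d$: for each $h\in H$, $d-d(h)\leq(\tfrac12-\epsilon)d$, so the number of elementary matrices attached to a fixed pair $(h,i)$ is $n^{d-d(h)}\leq n^{(1/2-\epsilon)d}$; summing over the at most $d$ admissible values of $i$ and the $m$ choices of $h$ yields $k\leq md\cdot n^{(1/2-\epsilon)d}$.

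Finally, I would apply Theorem~\ref{thm_rmp_simple} with $N=n^{d/2}$ and this $k$, producing $\hat M$ in time polynomial in $n^d$ at rank distance at least $\lfloor N/(k+1)\rfloor$ from $\mathrm{span}(\mc{A})$. Since $N/(k+1)$ is of order $n^{\epsilon d}/(md)$, this distance is comfortably at least the claimed $\lfloor n^{\epsilon d/2}/(2md)\rfloor$, which lower bounds $S$. The main subtle point---and where I expect to need the most care---is the elementary-matrix expansion of $M'_{i,h}$ in Remark~\ref{rem1}: it is essential for making $\mc{A}$ independent of the hypothetical ABP (so that RMP can be invoked uniformly), but it enlarges the effective spanning set per $(h,i)$ from $S^2$ rank-one summands up to all $n^{d-d(h)}$ elementary matrices. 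The hypothesis $\delta(H)>d/2$ is exactly what keeps this enlarged count strictly below $N=n^{d/2}$, so that Theorem~\ref{thm_rmp_simple} yields a genuinely large rank distance.
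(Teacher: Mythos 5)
Your proposal is correct and follows essentially the same route as the paper: assemble the ABP-independent family $\mc{A}$ exactly as in Remark~\ref{rem1}, bound $|\mc{A}|$ using $\delta(H)\geq(\tfrac12+\epsilon)d$, invoke Theorem~\ref{thm_rmp_simple} with $N=n^{d/2}$, and define $F$ by $M_{d/2}(F)=\hat M$. The only (harmless) deviation is that you bound $k$ by $md\,n^{(1/2-\epsilon)d}$, which is slightly tighter than the paper's $md\,n^{\frac{d}{2}(1-\epsilon)}$ and hence yields a distance of order $n^{\epsilon d}/(md)$ rather than the $n^{\epsilon d/2}/(2md)$ the paper records---so your argument actually recovers the stated bound with room to spare.
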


\begin{proof}
  Let $A$ be a homogeneous ABP $A$ of size $S$ computing a polynomial
  $f$ of degree $d$. Let $d_1(h)$ denote $\max\bra{1,d(h)-d/2}$ and
  $d_2(h)$ denote $\min\bra{d/2,d(h)-1}$. As explained in
  Remark~\ref{rem1}, let $E_{a,b}^{p,q}$ denote the $n^a\times
	n^b$-sized elementary matrix with $1$ in the $(p,q)$th entry and $0$s
	elsewhere. The matrices $\setcond{E_{a,b}^{p,q}}{p\in [n^a],q\in
	[n^b]}$ span all $n^a\times n^b$ matrices.

By Theorem \ref{thm_decomp}
\[
M_{d/2}(f) = M' + \sum_{h\in H_{\leq
d}}\sum_{i=d_1(h)}^{d_2(h)}M_i(h)\otimes_{i,d(h)-i}^{d/2} M'_{i,h}
\]
where $\rank{M'}\leq S$. As explained in Remark~\ref{rem1}, 
$M'_{i,h}$ is an $n^{d/2-i}\times n^{d/2-d(h)+i}$ dimension matrix and
is in the span of $\bra{E_{d/2-i,d/2-d(h)+i}^{p,q}}$, where $p\in
  [n^{d/2-i}], q\in [n^{d/2-d(h)+i}]$.

Let $\mathcal{A}$ denote the set of $n^{d/2}\times n^{d/2}$ matrices
of the form
$M_i(h)\otimes_{i,d(h)-i}^{d/2}E_{d/2-i,d/2-d(h)+i}^{p,q}$, where
$h\in H_{\leq d}$, $i\in\bra{d_1(h),\ldots,d_2(h)}$, and $p\in
[n^{d/2-i}]$, $q\in [n^{d/2-d(h)+i}]$. Then we obtain
\begin{equation}
\label{eq_6}
M_{d/2}(f) = M' + \sum_{M\in \mathcal{A}}\alpha_M M,
\end{equation}
where $\alpha_M\in\field$. Since $M'$ is a matrix of rank at most
$S$, this implies that $M$ is at rank distance at most $S$ from the
subspace generated by the matrices in $\mathcal{A}$.

Let $k=|\mathcal{A}|$. For each $h\in H$ and
$i\in\bra{d_1(h),\ldots,d_2(h)}$, we have added precisely $n^{d-d(h)}$
many matrices of the form $M_i(h)\otimes_{i,d(h)-i}^{d/2}E$, where $E$
is an elementary matrix of dimension $n^{d/2-i}\times n^{d/2-d(h)+i}$.
Since $d(h)\geq d(\frac{1}{2}+\epsilon)$ for each $h\in H_{\leq
d}\subseteq H$, this implies that $k\leq
mdn^{\frac{d}{2}(1-\epsilon)}$. Let $N$ denote $n^{d/2}$;
$\mathcal{A}$ consists of $k\leq mdN^{1-\epsilon}$ $N\times N$
matrices. By Theorem \ref{thm_rmp_simple}, we can, in time $\poly{N}$,
come up with an $N\times N$ matrix $M_0$ that is at rank distance at
least $\lfloor\frac{N}{k+1}\rfloor$ from the subspace generated by the
matrices in $\mathcal{A}$. We label the rows and columns of $M_0$ by
monomials from $\mono{d/2}{X}$, in the same way as the matrices in
$\mathcal{A}$ are labeled. Using $M_0$, we define the homogeneous
degree $d$ polynomial $F\in\fx$ to be the unique polynomial such that
$M_{d/2}(F) = M_0$; that is, given any monomial $m\in\mono{d}{X}$ such
that $m = m_1\cdot m_2$ for $m_1,m_2\in\mono{d/2}{X}$, $F(m)$ is
defined to be $M_0(m_1,m_2)$.

Let $A$ be a homogeneous ABP of size $S$ computing $F$ using the help
polynomials $H$. Then, by Equation \ref{eq_6} we have
\[M_{d/2}(F) = M' + \sum_{M\in\mathcal{A}}\alpha_M M\]
where $\alpha_M\in\field$, and $\rank{M'}\leq S$. Since $M_{d/2}(F)$ is
$M_0$, which is at rank distance at least $\lfloor N/(k+1)\rfloor$
from the subspace generate by $\mathcal{A}$, we see that
$S\geq \rank{M'}\geq \lfloor N/(k+1)\rfloor$. This implies that,
\[S\geq \left\lfloor\frac{N}{mdN^{1-\epsilon}+1} \right\rfloor\geq
\left\lfloor\frac{N^{\epsilon}}{2md}\right\rfloor = 
\left\lfloor\frac{n^{\frac{\epsilon d}{2}}}{2md}\right\rfloor\]
\end{proof}

\begin{remark}
	The rather unnatural condition on $\delta(H)$ above can be removed
	with better solutions to the RMP problem. Specifically, one can show
	along the above lines that if the RMP has an $(N,k,N/k^{1/2 -
	\epsilon})$-solution for $k=N^{2\delta}$, then for any $H$, there
	is an explicit polynomial that cannot be computed by any ABP $A$
	using $H$ of size at most $n^{\Omega(\epsilon d)}/(md)^{O(1)}$.
	Here, $\epsilon$ and $\delta$ are arbitrary constants in $(0,1)$.
\end{remark}

\subsection{The inhomogeneous case}

Let $\tH$ denote the set of all homogeneous
parts of degree at least $2$ obtained from polynomials in $H$, i.e
$\tH = \setcond{h_j^{(i)}}{j\in [m], 2\leq i \leq d(h_j)}$. For $2\leq
i\leq d(H)$, let $\tH_i = \setcond{h\in\tH}{d(h) = i}$. Note that $\tH
= \bigcup_{2\leq i \leq d(H)}\tH_i$. 

As in the previous subsection, we construct explicit hard polynomials
for even $d\in\naturals$. Let $\tH_{\leq d}$ denote $\bigcup_{2\leq
i\leq d}\tH_i$ if $d\leq d(H)$, and $\tH$ otherwise.

\begin{corollary}
\label{coro_gen_low_deg}
Assume $d(\tH_{\leq d})\leq d(1-\epsilon)$, for a fixed constant
$\epsilon>0$. Then, there is an explicit homogeneous polynomial $F$ of
degree $d$ such that any ABP that computes $F$ using the help
polynomials $H$ has size at least $\frac{n^{\frac{\epsilon
d}{4}}}{\sqrt{2m}d(d+1)}$.
\end{corollary}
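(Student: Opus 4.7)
The plan is to reduce the inhomogeneous case directly to the homogeneous case already proved as Theorem~\ref{thm_low_deg_bound}, using the homogenization result Theorem~\ref{thm_homogen} as the bridge. The witness polynomial will be the same $d$-full-rank polynomial used in the homogeneous case, for instance $F(X) = \sum_{m \in \mono{d/2}{X}} m \cdot m$, which is explicit.

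I would start by taking any ABP $A$ of size $S$ that uses the (possibly inhomogeneous) help polynomials $H$ and computes $F$. Since $F$ is homogeneous of degree $d\geq 1$, Theorem~\ref{thm_homogen} converts $A$ into a \emph{homogeneous} ABP $\tilde{A}$ of size at most $S(d+1)$ computing the same polynomial $F$, where $\tilde{A}$ uses only help polynomials from the homogenized set $\tilde{H}$. Each polynomial in $\tilde{H}$ is homogeneous of degree at least $2$, so the setting of Section~\ref{subsec_homogen} applies to $\tilde{A}$ and $\tilde{H}$.

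Next, by Lemma~\ref{lemma_high_deg_no_use} I may assume $\tilde{A}$ uses only the polynomials from $\tilde{H}_{\leq d}$. By hypothesis of the corollary, $d(\tilde{H}_{\leq d}) \leq (1-\epsilon)d$, and the cardinality obeys $|\tilde{H}_{\leq d}| \leq m(d-1) \leq md$, since for each of the $m$ original help polynomials $h_j$ we extract at most $d-1$ homogeneous components of degree $i$ with $2 \leq i \leq d$. Now Theorem~\ref{thm_low_deg_bound} applies verbatim to $\tilde{A}$, with the parameter $m$ in its statement replaced by $md$, yielding
$$|\tilde{A}| \;\geq\; \frac{n^{\epsilon d/4}}{\sqrt{2(md)\,d}} \;=\; \frac{n^{\epsilon d/4}}{d\sqrt{2m}}.$$
Combining this with the size bound $|\tilde{A}| \leq S(d+1)$ from homogenization gives
$$S \;\geq\; \frac{n^{\epsilon d/4}}{d(d+1)\sqrt{2m}},$$
which is the stated bound.

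I do not expect any genuine obstacle here: once Theorem~\ref{thm_homogen} and Theorem~\ref{thm_low_deg_bound} are in hand, the proof is essentially bookkeeping. The only mildly subtle point is correctly accounting for the parameter blow-up under homogenization, namely the factor $(d+1)$ in the size and the factor $d$ in the count of help polynomials; but both are routine.
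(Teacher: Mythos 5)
Your proof is correct and follows the paper's own argument exactly: homogenize via Theorem~\ref{thm_homogen}, discard high-degree help polynomials via Lemma~\ref{lemma_high_deg_no_use}, bound $|\tH_{\leq d}|\leq md$, and invoke Theorem~\ref{thm_low_deg_bound} with $m$ replaced by $md$. The bookkeeping matches the paper's, so nothing further is needed.
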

\begin{proof}
Let $F$ be a $d$-full-rank polynomial, as defined in Section
\ref{subsec_homogen}.  Consider any ABP $A$ computing $F$ using $H$.
By Theorem \ref{thm_homogen}, there exists a homogeneous ABP $\tA$
computing $F$ using $\tH$, where the size of $\tA$ is at most
$S(d+1)$. By Lemma \ref{lemma_high_deg_no_use}, we may assume that
$\tA$ uses only the help polynomials in $\tH_{\leq d}$. Since
$|\tH_{\leq d}|\leq md$, Theorem \ref{thm_low_deg_bound} tells us that
$S(d+1)\geq n^{\frac{\epsilon d}{4}}/\sqrt{2md^2}$, which implies the
result.
\end{proof}

\begin{corollary}
\label{coro_gen_high_deg}
Let $\delta(\tH) = \min_{h\in\tH} d(h)$, and assume $\delta(\tH)\geq
(\frac{1}{2}+\epsilon)d$ for a fixed constant $\epsilon > 0$. Then,
there exists an explicit homogeneous polynomial $F\in\fx$ of degree
$d$ such that any ABP $A$ computing $F$ using the help polynomials $H$
has size at least $\frac{1}{d+1}\left\lfloor \frac{n^{\frac{\epsilon
d}{2}}}{2md^2}\right\rfloor$.
\end{corollary}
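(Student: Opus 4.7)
The plan is to reduce the inhomogeneous case to the homogeneous Theorem \ref{thm_high_deg_bound}, via the homogenization procedure of Theorem \ref{thm_homogen}. This exactly parallels how Corollary \ref{coro_gen_low_deg} was obtained from Theorem \ref{thm_low_deg_bound}. The key observation is that the hypothesis $\delta(\tH) \geq (\tfrac{1}{2}+\epsilon)d$ is precisely the assumption needed to invoke Theorem \ref{thm_high_deg_bound} with $\tH$ in place of $H$.

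First I would apply Theorem \ref{thm_high_deg_bound} to the help set $\tH$ itself, which is homogeneous by construction and satisfies $\delta(\tH)\geq(\tfrac{1}{2}+\epsilon)d$ by assumption. This produces an explicit homogeneous polynomial $F\in\fx$ of degree $d$ such that any homogeneous ABP computing $F$ using $\tH$ has size at least $\lfloor n^{\epsilon d/2}/(2|\tH|d)\rfloor$. To translate this into the form stated in the corollary, I would bound $|\tH|\leq md$, which holds because each of the $m$ polynomials in $H$ contributes at most $d-1$ homogeneous parts of degree in $\{2,\ldots,d\}$ (and by Lemma \ref{lemma_high_deg_no_use} we may throw away the rest without loss). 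Substituting gives a lower bound of $\lfloor n^{\epsilon d/2}/(2md^2)\rfloor$ on the size of any homogeneous ABP for $F$ using $\tH$.

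Next, suppose $A$ is an arbitrary (possibly inhomogeneous) ABP of size $S$ computing $F$ using $H$. Since $F$ is homogeneous of degree $d \geq 1$, Theorem \ref{thm_homogen} yields a homogeneous ABP $\tA$ of size at most $S(d+1)$ computing the same polynomial $F$ using the homogenized help set $\tH$. Combining this with the bound from the previous step gives
\[
S(d+1) \;\geq\; \left\lfloor \frac{n^{\epsilon d/2}}{2md^2} \right\rfloor,
\]
which rearranges to the claimed $S \geq \tfrac{1}{d+1}\lfloor n^{\epsilon d/2}/(2md^2)\rfloor$. There is no real obstacle here since both Theorem \ref{thm_homogen} and Theorem \ref{thm_high_deg_bound} are already set up to compose in this way; the only care point is the bookkeeping $|\tH|\leq md$, and verifying that the polynomial $F$ produced by applying Theorem \ref{thm_high_deg_bound} to $\tH$ remains explicit, which follows because $\tH$ is polynomial-time computable from $H$.
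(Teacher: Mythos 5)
Your proposal is correct and is essentially the same argument the paper gives: homogenize via Theorem~\ref{thm_homogen}, then invoke Theorem~\ref{thm_high_deg_bound} on the homogenized help set, using the bound $|\cdot|\leq md$ on its size. One small bookkeeping point: $|\tH|\leq md$ is not literally true (a polynomial in $H$ of degree $d(H)>d$ contributes up to $d(H)-1$ homogeneous parts of degree $\geq 2$), so rather than applying Theorem~\ref{thm_high_deg_bound} to $\tH$ and then ``throwing away the rest,'' the cleaner route---and the one the paper takes---is to first apply Lemma~\ref{lemma_high_deg_no_use} to the homogenized ABP so it uses only $\tH_{\leq d}$, and then invoke Theorem~\ref{thm_high_deg_bound} with $\tH_{\leq d}$ (for which $|\tH_{\leq d}|\leq md$ does hold and $\delta(\tH_{\leq d})\geq\delta(\tH)\geq(\tfrac12+\epsilon)d$).
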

\begin{proof}
By Theorem \ref{thm_homogen}, given any ABP $A$ of size $S$ computing
a homogeneous polynomial of degree $d$, there is a homogeneous ABP
$\tA$ of size at most $S(d+1)$ that computes the same polynomial as
$A$ using the help polynomials $\tH$. By Lemma
\ref{lemma_high_deg_no_use}, we may assume that $\tA$ only uses the
help polynomials $\tH_{\leq d}$. Now, let $F$ be the explicit
polynomial from Theorem \ref{thm_high_deg_bound}, with $\tH_{\leq d}$
taking on the role of $H$ in the statement of the theorem; since
$|\tH_{\leq d}|\leq md$, Theorem
\ref{thm_high_deg_bound} tells us that $S(d+1)\geq \lfloor
n^{\frac{\epsilon d}{2}}/2md^2\rfloor$, which implies the result.
\end{proof}

\section{A better solution to the RMP}
\label{section_rmp_improv}

Following the approach of Alon et al \cite{APY}, who provide a
nontrivial algorithm for RPP in the Hamming metric (over $\F_2$), we
improve on the parameters of Theorem \ref{thm_rmp_simple} for the RMP
over small prime fields. It is interesting to note that in our
solution we get similar parameters as \cite{APY}. As mentioned
earlier, the improvement in parameters over the simple solution of
Theorem \ref{thm_rmp_simple} is too little to give us a much better
lower bound. 

Throughout this section, $\field$ will denote a constant-sized field.
The main result is stated below.

\begin{theorem}
\label{thm_rmp_improv}
For any fixed constant $c>0$, the RMP has an $(N,\ell N,r)$-solution
over any constant-sized field $\field$ and for any $\ell,r > 0$ such that
$\ell\cdot r< c\log N$.
\end{theorem}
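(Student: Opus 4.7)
The plan is to adapt the deterministic algorithm of Alon, Panigrahy and Yekhanin \cite{APY} for the Hamming RPP to the rank metric, aiming for the rank-metric analog of their logarithmic-factor improvement over the trivial bound of Theorem~\ref{thm_rmp_simple} (which only yields $\ell r = O(1)$).

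At a high level, I would identify $N \times N$ matrices over $\field$ with vectors in $\field^{N^2}$ and work with an explicit polynomial-size family $\mathcal{G} \subseteq \field^{N^2}$ of candidate matrices, constructed from an $\epsilon$-biased sample space over the small prime field $\field$ (such constructions are standard and give $|\mathcal{G}| = \poly{N, 1/\epsilon}$ with $\epsilon$ an inverse polynomial in $N$). Given the subspace $W = \mathrm{span}(P_1,\ldots,P_{\ell N})$ of dimension at most $\ell N$, the goal is to show that the ``bad'' set $B = W + \mathcal{R}_{r-1}$ of matrices within rank distance $r-1$ of $W$ (with $\mathcal{R}_{r-1}$ denoting the variety of matrices of rank at most $r-1$) has intersection with $\mathcal{G}$ of size strictly less than $|\mathcal{G}|$, so that at least one $M \in \mathcal{G}$ satisfies $d_{\mathrm{rank}}(M, W) \geq r$.

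The crucial step is a character-sum (Fourier) bound on $|\mathcal{G} \cap B|$. I would decompose $\mathcal{R}_{r-1}$ as a union of linear pieces indexed by $(r-1)$-dimensional column-space representatives, so that $B$ becomes a union of affine subspaces of $\field^{N^2}$ each of dimension at most $\ell N + N(r-1)$. The $\epsilon$-biased property gives a per-piece bound on the fraction of $\mathcal{G}$ contained in each affine subspace, and aggregating over the pieces --- via a MacWilliams/Krawtchouk-type spectral estimate on the indicator of the rank ball, the rank-metric analog of the Hamming-ball estimates used in \cite{APY} --- should yield a total bound that drops below $1$ exactly in the target regime $\ell r < c\log N$, paralleling APY's Hamming bound $rk < cN\log k$ at $k = \ell N$.

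The main obstacle is this aggregated spectral estimate: it must deliver precisely the $\ell r < c\log N$ trade-off, and it is the rank-metric analog of APY's sharp Krawtchouk analysis, for which the Grassmannian fiber structure of low-rank matrices substitutes for the combinatorial structure of the Hamming ball. A secondary challenge is the deterministic identification of a good candidate $M$ in $\mathcal{G}$, since the minimum-rank-in-affine-subspace problem is NP-hard in general; however, in the target regime ($r$ polylogarithmic in $N$), this can be handled by enumerating over rank-$(r-1)$ column-space representatives and solving, for each, a linear system for the remaining factor and for $Q \in W$, which stays within $\poly{N}$ time when $\ell r = O(\log N)$.
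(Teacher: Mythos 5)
Your proposal diverges substantially from the paper's actual route, and it contains two gaps that I think are fatal rather than cosmetic. The paper's proof (following the actual strategy of \cite{APY}) does not use $\epsilon$-biased sets or Fourier/Krawtchouk estimates at all. Instead it (i) explicitly covers the rank ball $B_r$ by an $N^{O(c)}$-size family $\mc{S}$ of linear subspaces of $\field^{N^2}$, each of dimension at most $N^2 - (\ell+1)N$, by first building an $(\ell+1,r)$-good family $\mc{T}$ of subspaces of $\field^N$ (Claims \ref{claim_vector_to_matrix} and \ref{claim_expl_const}) and lifting each $U\in\mc{T}$ to $V(U) = \mathrm{span}\{u\transpose{v} : u\in U, v\in\field^N\}$; and (ii) finds a matrix outside $\bigcup_{V\in\mc{S}}(L+V)$ by the method of conditional probabilities, fixing coordinates of $\field^{N^2}$ one at a time (Lemma \ref{lemma_union_subspaces}). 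The whole construction is self-verifying --- the output is guaranteed remote by construction --- so no rank-distance check is ever performed.

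The first gap in your proposal is the one you yourself flag: the ``MacWilliams/Krawtchouk-type spectral estimate on the indicator of the rank ball'' is not a known off-the-shelf fact, and you give no indication of how to prove it. In the Hamming setting, $\epsilon$-bias controls correlation of a set against characters, and the Fourier spectrum of the Hamming ball indicator is well understood; in the rank metric, the analogous harmonic-analytic machinery (over the association scheme of bilinear forms) is much less standard, and it is not at all clear that it yields the $\ell r < c\log N$ tradeoff. Since this is the crux of your argument, the proposal as written does not constitute a proof.

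The second gap is in the ``secondary challenge'': your plan to certify a good $M\in\mc{G}$ by ``enumerating over rank-$(r-1)$ column-space representatives'' is not polynomial time. The number of $(r-1)$-dimensional subspaces of $\field^N$ is the Gaussian binomial $\binom{N}{r-1}_{|\field|}$, which is roughly $|\field|^{(r-1)(N-r+1)}$ --- exponential in $N$ for any fixed $r\geq 2$, and certainly for $r=\Theta(\log N)$. The regime $\ell r = O(\log N)$ does not tame this; it is the dimension $N$, not $r$, that drives the blowup. This is exactly the obstacle the paper's covering argument is designed to avoid: a single subspace $V(U)$ with $\dim U\leq N-\ell$ simultaneously covers \emph{all} rank-$\leq r$ matrices whose rank decomposition vectors $u_1,\ldots,u_r$ lie in $U$, i.e., a whole Grassmannian's worth of column spaces at once; and the point-outside-the-union step never asks you to compute a rank distance at all. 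In short, your proposal replaces a working elementary covering-plus-greedy argument with an unproven spectral lemma and a verification routine that is exponential, so the approach needs to be reworked along the lines of subspace coverings rather than character sums.
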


In proving the above theorem, we will follow the algorithm of
\cite{APY}. We need the following lemma, implicit in \cite{APY}:

\begin{lemma}
\label{lemma_union_subspaces}
Fix any field $\field$ such that Gaussian elimination over $\field$
can be performed in polynomial time.  There is a $\poly{M,m,|\field|}$
time algorithm for the following problem: Given subspaces
$V_1,V_2,\ldots,V_m$ of $\field^M$ such that $\sum_{i=1}^m|V_i|<
|\field|^M$, find a point $u\in\field^M$ such that $u\notin
\bigcup_iV_i$.
\end{lemma}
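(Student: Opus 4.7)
The plan is to construct $u$ one coordinate at a time by the method of conditional expectations. Write $u=(u_1,\ldots,u_M)$ and, after the prefix $u_1,\ldots,u_j$ has been chosen, let
\[
V_i^{(j)} = \setcond{v\in V_i}{v_\ell = u_\ell\text{ for all }\ell\leq j}.
\]
Each $V_i^{(j)}$ is either empty or an affine subspace of $\field^M$, so its size is either $0$ or a power of $|\field|$. The invariant I will maintain is $\sum_i |V_i^{(j)}| < |\field|^{M-j}$. It holds at $j=0$ by hypothesis, and at $j=M$ it forces $\sum_i|V_i^{(M)}|=0$, so the resulting $u$ avoids every $V_i$.

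The extension step from $j$ to $j+1$ rests on the following case analysis. For each fixed $i$, the projection of the affine subspace $V_i^{(j)}$ onto the $(j+1)$-st coordinate is either a single point or all of $\field$. In the first case, exactly one value of $u_{j+1}$ gives $V_i^{(j+1)} = V_i^{(j)}$ and every other value gives the empty set. In the second case every value of $u_{j+1}$ gives a codimension-one affine subspace of $V_i^{(j)}$, so $|V_i^{(j+1)}| = |V_i^{(j)}|/|\field|$. Either way,
\[
\sum_{u_{j+1}\in\field}|V_i^{(j+1)}| \; =\; |V_i^{(j)}|.
\]
Summing over $i$ and averaging over the $|\field|$ choices of $u_{j+1}$, some value of $u_{j+1}$ achieves $\sum_i|V_i^{(j+1)}| \leq \sum_i|V_i^{(j)}|/|\field| < |\field|^{M-j-1}$, preserving the invariant.

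Algorithmically, I keep each $V_i^{(j)}$ represented by an offset vector and a basis for the underlying linear subspace. Imposing the additional coordinate constraint $x_{j+1}=u_{j+1}$ amounts to solving one linear equation among the basis coefficients, which is done in $\poly{M}$ time by Gaussian elimination and at the same time yields $|V_i^{(j+1)}|=|\field|^{\dim V_i^{(j+1)}}$. At each of the $M$ stages, I try all $|\field|$ values of $u_{j+1}$, update each of the $m$ restricted subspaces, and pick the candidate minimizing $\sum_i|V_i^{(j+1)}|$; the averaging step above guarantees this minimum maintains the invariant. The total running time is $\poly{M,m,|\field|}$, as required.

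The only subtle point is the case analysis justifying that $\sum_{u_{j+1}}|V_i^{(j+1)}|$ equals $|V_i^{(j)}|$ exactly, and it is precisely this identity that lets the derandomization go through; everything else is bookkeeping with affine subspaces via Gaussian elimination.
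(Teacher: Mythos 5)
Your proposal is correct and follows essentially the same approach as the paper: fix coordinates of $u$ one at a time, maintain the invariant that the total count of consistent points in the $V_i$'s is strictly less than the number of remaining free coordinates allows, and use the partition identity $\sum_{\alpha\in\field}|V_i^{(j+1)}|=|V_i^{(j)}|$ to derandomize the choice of each coordinate via averaging, with Gaussian elimination to compute the sizes of the restricted affine subspaces. Your explicit case analysis of the projection onto the $(j+1)$-st coordinate is a clean way to justify the partition identity that the paper leaves implicit, but the underlying argument is the same.
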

\begin{proof}
	The algorithm will fix the coordinates of $u$ one by one.
Assuming that the values $u_1,u_2,\ldots,u_i$ have been fixed for
$0\leq i\leq n$, let $U_i = \setcond{w\in \field^M}{w_j =
u_j\textrm{ for } 1\leq j\leq i}$. The algorithm will fix the
coordinates of $u$, ensuring that the following is true: For each
$i$ such that $1\leq i\leq M$, $\sum_{j=1}^m|V_j\intersect
U_i|<|U_i| = |\field|^{M-i}$. Note that, since $U_0$ is just $\field^M$, the inequality
is satisfied at $i=0$ by the assumption on the size of the subspaces
$V_1,V_2,\ldots,V_m$; also note that the inequality is satisfied at
$i=M$ if and only if $u\notin \bigcup_iV_i$. 

Assuming $u_1,u_2,\ldots,u_i$ have been fixed for $i<M$, we define,
for every $\alpha\in\field$, the set $U_{i,\alpha} = \setcond{w\in
U_i}{w_{i+1} = \alpha}$. Clearly, the sets $\bra{U_{i,\alpha}}_\alpha$
partition $U_i$. Hence, we see that
$\sum_{j=1}^m|V_j\intersect U_i| =
\sum_{\alpha\in\field}\sum_{j=1}^m|V_j\intersect U_{i,\alpha}|$ and
thus, there is some $\alpha\in\field$ such that
$\sum_{j=1}^m|V_j\intersect U_{i,\alpha}|<\frac{|U_i|}{|\field|} =
|\field^{M-i-1}|$.

Here is the algorithm:
\begin{itemize}
\item While $u_1,u_2,\ldots,u_i$ have been determined for $i<M$, do
the following:
\begin{itemize}
\item As mentioned above, the following invariant is maintained:
$\sum_{j=1}^k|V_j\intersect U_i|<|U_i| = |\field|^{M-i}$.
\item Find $\alpha\in\field$ such that $\sum_{j=1}^k|V_j\intersect
U_{i,\alpha}|<\frac{|U_i|}{|\field|} = |\field^{M-i-1}|$. By the
reasoning in the paragraph above, such an $\alpha$ exists and surely,
it can be found in $\poly{M,k,|\field|}$ time using Gaussian
elimination.
\item Set $u_{i+1}$ to $\alpha$.
\end{itemize}
\end{itemize}

The correctness of the algorithm is clear from the reasoning above.
\end{proof}

We now briefly describe the improved algorithm for the RMP. Let
$P_1,P_2,\ldots,P_k$ be the input matrices. We denote by $L$ the
subspace of $\field^{N\times N}$ spanned by these matrices. Also,
let $B_r$ denote the matrices of rank at most $r$. The idea of the
algorithm is to ``cover'' the set $L+B_r$ by a union of subspaces
$V_1,V_2,\ldots,V_m$ such that $\sum_i|V_i|<|\field|^{N^2}$. We then use
the algorithm from Lemma \ref{lemma_union_subspaces} to find a matrix
$P$ that is not in $\bigcup_i V_i$; by the way we have picked the
subspaces, it is clear that $M$ will then be at rank distance at least
$r$ from the subspace $L$.

What follows is an important definition.
\begin{definition}
	Fix positive integers $(d_1,d_2)$. Given $\mc{T}$, a collection of
	subspaces of $\field^N$, we say that $\mc{T}$ is $(d_1,d_2)$-good
	if:
	\begin{itemize}
		\item $\mydim(U)\leq N - d_1$ for each $U\in\mc{T}$.
		\item Each $A\subseteq \field^N$ of size $d_2$ is contained in
			some $U\in\mc{T}$.
	\end{itemize}
\end{definition}

The following claim illustrates the importance of $(d_1,d_2)$-good
subspaces of $\field^N$.
\begin{claim}
	\label{claim_vector_to_matrix}
	There is an algorithm that, when given as input $\mc{T}$, a 
	$(d_1,d_2)$-good collection of subspaces of $\field^N$, produces a
	collection $\mc{S}$ of subspaces of $\field^{N\times N}$ of
	cardinality at most $|\mc{T}|$, with the following properties:
	\begin{itemize}
		\item $\mydim(V)\leq N^2 - d_1N$ for each $V\in\mc{S}$.
		\item $B_{d_2}\subseteq \bigcup_{V\in\mc{S}}V$
	\end{itemize}
	Moreover, the algorithm runs in time $\poly{|\mc{T}|,N}$.
\end{claim}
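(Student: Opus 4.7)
The plan is to lift each subspace $U \in \mc{T}$ of $\field^N$ to the subspace $V_U \subseteq \field^{N \times N}$ consisting of matrices all of whose columns lie in $U$, and set $\mc{S} = \setcond{V_U}{U \in \mc{T}}$. Concretely, define
\[
V_U = \setcond{M \in \field^{N \times N}}{\text{every column of } M \text{ lies in } U}.
\]
Each of the $N$ columns ranges independently over $U$, so $V_U$ is a linear subspace with $\mydim(V_U) = N \cdot \mydim(U) \le N(N - d_1) = N^2 - d_1 N$. A basis for $V_U$ is read off in time $\poly{N}$ from a basis of $U$ by placing each basis vector of $U$ into each of the $N$ column positions in turn; hence the whole collection $\mc{S}$ is produced in time $\poly{|\mc{T}|, N}$ with $|\mc{S}| \le |\mc{T}|$, as required.

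For the covering property, fix $M \in B_{d_2}$, so $\rank{M} \le d_2$. The column space of $M$, call it $C(M) \subseteq \field^N$, has dimension at most $d_2$. Pick any basis of $C(M)$, and pad it arbitrarily inside $\field^N$ to obtain a set $A$ of size exactly $d_2$ (this step is why the goodness hypothesis is phrased in terms of $d_2$-element sets rather than $d_2$-dimensional subspaces). By $(d_1,d_2)$-goodness of $\mc{T}$, there exists some $U \in \mc{T}$ with $A \subseteq U$. Then $C(M) \subseteq \mathrm{span}(A) \subseteq U$, so every column of $M$ lies in $U$, giving $M \in V_U$. Thus $B_{d_2} \subseteq \bigcup_{V \in \mc{S}} V$.

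Once one notices that a rank-$\le d_2$ matrix is exactly a matrix whose columns live in some $d_2$-dimensional subspace of $\field^N$, the construction is essentially forced, and both properties fall out of the single multiplicative identity $\mydim(V_U) = N \cdot \mydim(U)$ (which turns the codimension-$d_1$ bound on $U$ into the codimension-$d_1 N$ bound on $V_U$). I do not anticipate any real obstacle; the only subtlety is matching the ``size-$d_2$ set'' form of goodness by padding a basis of $C(M)$, and verifying that the padded set is still inside $\field^N$, which is immediate because $d_2 \le N$ (otherwise $B_{d_2} = \field^{N \times N}$ and the hypothesis $\mydim(U) \le N - d_1$ together with $(d_1, d_2)$-goodness already forces $|\mc{T}|$ to cover all of $\field^N$ in $d_2$-subsets, so the claim holds trivially).
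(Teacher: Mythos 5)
Your construction is exactly the paper's: the subspace $V_U$ of all matrices whose columns lie in $U$ coincides with the span of $\{u\transpose{v} : u \in U,\ v \in \field^N\}$ that the paper uses, and your covering argument via the column space is the same as the paper's rank-one decomposition $Q = \sum_i u_i\transpose{v_i}$ (with $\{u_i\}$ playing the role of your spanning set of $C(M)$). The only difference is cosmetic --- you explicitly pad the basis of $C(M)$ up to a size-$d_2$ set, a small point the paper glosses over --- so this is essentially the same proof.
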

\begin{proof}
	For each $U\in\mc{T}$, let $V(U)$ denote the subspace of
	$\field^{N\times N}$ generated by all vectors of the form
	$u\transpose{v}$, where $u\in U$ and $v\in \field^N$. The collection
	$\mc{S}$ is the collection of all such vector spaces $V(U)$, for
	$U\in\mc{T}$. Clearly, the cardinality of $\mc{S}$ is bounded by
	$|\mc{T}|$.

	Note that a basis for $V(U)$ can be constructed by picking only
	$u\transpose{v}$ where $u$ and $v$ range over bases for $U$ and
	$\field^N$ respectively. This shows that $\mydim(V(U))\leq N^2-d_1N$
	and that $V(U)$ can be constructed efficiently.

	Finally, given any matrix $Q$ of rank at most $d_2$, it can be
	written as a sum of matrices $Q_1+Q_2+\ldots+Q_{d_2}$, where each
	$Q_i$ is a matrix of rank at most $1$ and hence can be written as
	$u_i\transpose{v_i}$, where $u_i,v_i\in\field^N$. Let $A =
	\bra{u_1,u_2,\ldots,u_{d_2}}$. Since $\mc{T}$ is $(d_1,d_2)$-good,
	there is some $U\in\mc{T}$ such that $A\subseteq U$. This implies
	that $u_i\transpose{v_i}\in V(U)$ for each $i\in [d_2]$. As $V(U)$
	is a subspace, it must contain their sum $Q$. This concludes the
	proof.
\end{proof}

It is easily seen that a random collection of subspaces of $\field^N$
of appropriate dimension is $(d_1,d_2)$-good for the values of
$d_1$ and $d_2$ that are of interest to us. We now assert the
existence of an explicit collection of subspaces with this property.

\begin{claim}
	\label{claim_expl_const}
	Fix any constant $c\geq 1$. For any $\ell,r\in\naturals$ such that
	$\ell\cdot r<c \log N$, there is an algorithm that runs in time
	$N^{O(c)}$ and produces an $(\ell,r)$-good collection of subspaces
	of $\field^N$. 
\end{claim}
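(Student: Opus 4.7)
The plan is to construct $\mc{T}$ explicitly via low-degree polynomials over an extension field, following the spirit of Alon et al.'s \cite{APY} construction for RPP in the Hamming metric. First, note that producing a $(\ell, r)$-good collection $\mc{T}$ of subspaces of codimension at least $\ell$ is equivalent to producing a family $\mc{W}$ of $\ell$-dimensional subspaces $W \subseteq (\field^N)^*$ such that every $r$-element $A \subseteq \field^N$ is annihilated by some $W$; one then takes $\mc{T} = \setcond{W^\perp}{W \in \mc{W}}$. A random collection of $N^{O(c)}$ such $W$'s suffices, since a random $\ell$-dimensional $W$ annihilates a fixed $r$-subset with probability at least $|\field|^{-\ell r} \geq N^{-c}$ by the hypothesis $\ell r < c\log N$, and there are at most $|\field|^{rN}$ such subsets to union-bound against.

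For the derandomized construction, let $s = \lceil \log_{|\field|} N \rceil$, $Q = |\field|^s$, and identify $[N]$ with $N$ distinct elements $\alpha_1, \ldots, \alpha_N$ of $\F_Q$. Fix a nonzero $\field$-linear map $\tau: \F_Q \to \field$. For each polynomial $p \in \F_Q[x]$ of degree less than $D$ --- with $D$ chosen so that $sD \geq r + \ell$ --- define $\phi_p: \field^N \to \field$ by $\phi_p(v) = \tau\bigl(\sum_i v_i p(\alpha_i)\bigr)$. The candidate family $\mc{W}$ consists of all subspaces $\mathrm{span}(\phi_{p_1}, \ldots, \phi_{p_\ell})$ for $\ell$-tuples $(p_1, \ldots, p_\ell) \in \F_Q[x]_{<D}^\ell$ yielding $\field$-independent functionals. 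Given any $r$-subset $A = \bra{u_1, \ldots, u_r}$, the constraint that $\phi_p$ annihilates $A$ imposes $r$ $\field$-linear equations on the coefficients of $p$, which live in an $\field$-space of dimension $sD$; hence the set of valid $p$'s is an $\field$-subspace of dimension at least $sD - r \geq \ell$, from which we can greedily extract $\ell$ polynomials whose $\phi_{p_j}$'s are independent, producing an annihilating $W \in \mc{W}$.

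The enumeration has size at most $|\F_Q[x]_{<D}|^\ell = |\field|^{sD\ell}$, and the running time is polynomial in this (Gaussian elimination per tuple). The main obstacle I anticipate is the parameter balancing: we need $sD\ell = O(c\log N)$ to get $|\mc{T}| \leq N^{O(c)}$, which (with $sD \approx r + \ell$) amounts to controlling $r\ell + \ell^2$ against $cs \approx c\log N$. The hypothesis $r\ell < c\log N$ handles the first term, but in the regime where $\ell$ is much larger than $\sqrt{c\log N}$ (forcing $r$ to be very small), the $\ell^2$ contribution dominates and would force a different construction --- likely a case split or a more careful Reed-Solomon-style covering code as in \cite{APY}, exploiting that small $r$ reduces the problem to nearly covering $\field^N$ itself. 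The remaining linear-algebraic and bookkeeping steps are routine.
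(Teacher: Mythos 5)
Your high-level instinct is right that a case split is unavoidable, and indeed the paper splits into $\ell\geq r$ (handled by a very simple subset-based covering of the first $2\ell$ coordinates) and $\ell\leq r$ (the hard case). However, the Reed--Solomon/dual-space construction you propose for the main case has two genuine gaps, neither of which is merely ``routine bookkeeping.''

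First, the dimension count $sD-r$ does not give you $\ell$ independent functionals. It lower-bounds the $\field$-dimension of the space of polynomials $p$ satisfying the interpolation constraints $\phi_p(u_j)=0$, but the map $p\mapsto\phi_p$ has a kernel that your sketch does not control: $\phi_p=0$ exactly when $\tau(p(\alpha_i))=0$ for every $i$, which holds e.g.\ for any constant $p=a\in\ker\tau$, and more generally for a subspace of $p$'s whose dimension depends on how the Reed--Solomon code meets $(\ker\tau)^N$. The dimension you need to bound from below is that of the \emph{image} of the valid $p$'s under $p\mapsto\phi_p$, and $sD-r\geq\ell$ does not imply that image has dimension $\geq\ell$. (In the extreme case $D=1$, the image is at most one-dimensional regardless of $s$.)

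Second, the parameter balancing fails even inside the $\ell\leq r$ regime, not only when $\ell\gg\sqrt{c\log N}$ as you suggest. Your enumeration has size $|\F_Q[x]_{<D}|^{\ell}=|\field|^{sD\ell}$. Since $D\geq 1$ and $s\geq\log_{|\field|}N$, we have $|\field|^{sD}\geq N$, so the enumeration is at least $N^{\ell}$. This is $N^{O(c)}$ only when $\ell=O(c)$, but under the constraint $\ell\leq r$ and $\ell r<c\log N$ one can have $\ell$ as large as $\Theta(\sqrt{c\log N})$, which is $\omega(c)$; then $N^{\ell}$ is super-polynomial in $N$. The quantity $sD\ell$ cannot track $r\ell+\ell^2$ as you hope, because $sD$ cannot drop below $s=\Theta(\log N)$, so the $s\ell$ rounding term dominates whenever $\ell=\omega(c)$. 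The paper avoids all of this by an entirely different route in the $\ell\leq r$ case: it fixes $t=|\field|^{\Theta(\sqrt{c\log N})}$ and $d=\Theta(\sqrt{c\log N})$ and covers with subspaces of the form $V_{S,A}$ (spanned by vectors whose restriction to a $d$-set $S\subseteq[t]$ lies in a prescribed $(d-\ell)$-set $A$), and the covering property comes from a combinatorial lemma about $d$-wise corank of vector families (building on Alon--Babai--Itai's lower bounds for $d$-wise independent sample spaces), not from polynomial interpolation. That lemma is precisely what lets the parameters balance: $t^d\cdot|\field|^{d^2}=N^{O(c)}$ because both $d\log_{|\field|}t$ and $d^2$ are $O(c\log N)$.
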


We prove the above claim in the next section. Assuming the claim, we
can prove Theorem \ref{thm_rmp_improv}.

\begin{proof}[Proof of Theorem \ref{thm_rmp_improv}]
	We will describe an algorithm for the problem. Without loss of
	generality, assume that $c\geq 1$. Let $L$ be the input
	subspace of dimension at most $\ell N$. We would like to
	find a matrix $P$ that is at rank distance at least $r$ from $L$.

	We first use the algorithm referred to in Claim
	\ref{claim_expl_const} to construct an $(\ell+1,r)$-good collection of
	subspaces $\mc{T}$ of $\field^N$ in time $N^{O(c)}$. Clearly,
	$|\mc{T}| = N^{O(c)}$. Then, we use the algorithm of Claim
	\ref{claim_vector_to_matrix} to construct a collection of subspaces
	$\mc{S}$ of $\field^{N\times N}$ of size $N^{O(c)}$ with the
	following properties:
	\begin{itemize}
		\item $\mydim(V)\leq N^2 - (\ell + 1)N$ for each $V\in\mc{S}$.
		\item $B_{r}\subseteq \bigcup_{V\in\mc{S}}V$
	\end{itemize}

	Consider the collection of subspaces $\mc{S}' = \bra{L+V\ |\
	V\in\mc{S}}$. Clearly, $L+B_r\subseteq \bigcup_{V\in\mc{S}'}V$.
	Moreover, the dimension of each subspace in $\mc{S}'$ is at most
	$\ell N + N^2 - (\ell+1)N \leq N^2 - N$. Hence, each subspace in
	$\mc{S}'$ is of cardinality at most $|\field|^{N^2-N}$. Since
	$|\mc{S}'| = N^{O(c)}$, for large enough $N$, we have
	$\sum_{V\in\mc{S}'}|V|< |\field|^{N^2}$. Hence, using the algorithm
	described in Lemma \ref{lemma_union_subspaces}, we can, in time
	$N^{O(c)}$, find a matrix $P\notin \bigcup_{V\in\mc{S}'}V$.
	By construction, this matrix $P$ is at rank distance greater than
	$r$ from the subspace $L$. The entire algorithm runs in time
	$N^{O(c)}$.
\end{proof}

\subsection{Proof of Claim \ref{claim_expl_const}}

We give two different constructions: one for the case that $\ell\geq
r$ and the other for the case that $\ell \leq r$.

The following notation will be useful. For each $i\in [N]$, let
$e_i\in\field^N$ denote the vector that has a $1$ in coordinate $i$
and is $0$ elsewhere. For any vector $x\in\field^N$ and $S\subseteq
[N]$, we denote by $x|_S$ the vector in $\field^{|S|}$ that is the
projection of $x$ to the coordinates indexed by $S$.

\subsubsection{Case 1: $\ell\geq r$}
For each $A\subseteq\field^{2\ell}$ of cardinality $r$, let $V_A$ be the
subspace generated by $\setcond{x\in\field^N}{x|_{[2\ell]} \in A}$. It
is easily seen that $\mydim(V_A)\leq N-2\ell + r\leq N- \ell$.
Moreover, given any $A_1\subseteq\field^N$ of size $r$, $A_1\subseteq
V_{A}$ where $A$ is any subset of $\field^{2\ell}$ of size $r$
containing $\setcond{x|_{[2\ell]}}{x\in A_1}$. Hence, the collection
$\mc{T} = \setcond{V_A}{A\subseteq\field^{2\ell}, |A| = r}$ is an
$(\ell,r)$-good collection of subspaces. 

The cardinality of $\mc{T}$ is $\binom{|\field|^{2\ell}}{r}\leq
|\field|^{2\ell r} = N^{O(c)}$. Surely, $\mc{T}$ can be constructed in
time $N^{O(c)}$.

\subsubsection{Case 2: $\ell\leq r$}
Given a set $A\subseteq \field^m$ for some $m\in\naturals$, we denote
by $\rank{A}$ the size of any maximal set of linearly independent
vectors from $A$; we denote by $\corank{A}$ the value $(|A|-\rank{A})$. 

Fix a set $A\subseteq\field^m$ for some $m\in\naturals$. Given
$d,d'\in\naturals$, we say that $A$ is \emph{$d$-wise corank $d'$} if
each $B\subseteq A$ such that $|B| = d$ satisfies $\corank{B}\leq d'$;
$A$ is said to be \emph{$d$-wise linearly independent} if it is
$d$-wise corank $0$. Sets that are $d$-wise linearly independent have
been studied before: see \cite[Proposition 6.5]{ABI}, where matrices
whose columns form a $d$-wise linearly independent set of vectors are
used to construct $d$-wise independent sample spaces. The following
claim follows from this result and from the lower bound on the size of
any $d$-wise independent sample space proved in \cite[Proposition
6.4]{ABI}.

\begin{claim}[implicit in \cite{ABI}]
	\label{claim_ABI}
	Consider a set $A\subseteq\field^m$ of cardinality $t$. If $A$ is
	$d$-wise linearly independent with $d\leq 2\sqrt{t}$, then $m\geq
	\frac{d\log_{|\field|} t}{5}$, for large enough $d,t$.
\end{claim}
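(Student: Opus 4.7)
The plan is to reduce the claim to the lower bound on sizes of $d$-wise independent sample spaces from \cite[Proposition 6.4]{ABI}, via the well-known duality between $d$-wise linearly independent vectors and such sample spaces (the construction of \cite[Proposition 6.5]{ABI}).

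Given $A = \{a_1, \ldots, a_t\} \subseteq \field^m$ that is $d$-wise linearly independent, I would form the sample space
\[
S = \setcond{(\ip{r}{a_1}, \ip{r}{a_2}, \ldots, \ip{r}{a_t})}{r \in \field^m} \subseteq \field^t,
\]
equipped with the distribution induced by drawing $r$ uniformly from $\field^m$. For any $d$ distinct indices $i_1, \ldots, i_d \in [t]$, the vectors $a_{i_1}, \ldots, a_{i_d}$ are linearly independent (this is exactly what $d$-wise linear independence asserts), so the linear map $r \mapsto (\ip{r}{a_{i_1}}, \ldots, \ip{r}{a_{i_d}})$ is surjective from $\field^m$ onto $\field^d$, and the corresponding projection of $S$ is therefore uniformly distributed on $\field^d$. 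Hence $S$ is a $d$-wise independent sample space over $\field^t$, and trivially $|S| \leq |\field|^m$.

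Next, I would invoke Proposition 6.4 of \cite{ABI}, which supplies a lower bound of the form $|S| \geq \sum_{i=0}^{\lfloor d/2 \rfloor}\binom{t}{i}(|\field|-1)^i \geq \binom{t}{\lfloor d/2 \rfloor}$ on the size of any $d$-wise independent sample space. Under the hypothesis $d \leq 2\sqrt{t}$, the standard binomial estimate $\binom{t}{\lfloor d/2 \rfloor} \geq (2t/d)^{\lfloor d/2 \rfloor}$ yields $|S| \geq t^{d/2 - o(1)}$ for large $d, t$. Combining $|\field|^m \geq |S|$ with this lower bound and taking logarithms base $|\field|$ gives
\[
m \geq \left(\frac{d}{2} - o(1)\right) \log_{|\field|} t \geq \frac{d \log_{|\field|} t}{5}
\]
for $d, t$ sufficiently large, as required.

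The main obstacle here is purely bookkeeping: confirming that the ABI lower bound (whose original statement is for $\{0,1\}$-valued sample spaces) extends cleanly to sample spaces over a general finite field $\field$ with the correct dependence on $|\field|$, and that the binomial estimate under the constraint $d \leq 2\sqrt{t}$ absorbs the $o(1)$ error term so as to yield the stated constant $1/5$. The ABI argument is linear-algebraic at its core (it computes ranks of spaces spanned by indicator functions of small-support sets) and carries over verbatim to $\field^t$; the constant $1/5$ is comfortably slack compared to the near-optimal $(1/2 - \epsilon)$ bound one obtains from a careful reading.
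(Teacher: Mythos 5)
Your approach is exactly the paper's: the text preceding the claim explicitly says it ``follows from'' the duality of \cite[Proposition 6.5]{ABI} (columns of a $d$-wise linearly independent matrix give a $d$-wise independent sample space via $r\mapsto(\ip{r}{a_1},\ldots,\ip{r}{a_t})$) together with the sample-space size lower bound of \cite[Proposition 6.4]{ABI}, which is precisely what you do. One small arithmetic slip: under the hypothesis $d\leq 2\sqrt{t}$, the estimate $\binom{t}{\lfloor d/2\rfloor}\geq(2t/d)^{\lfloor d/2\rfloor}$ only guarantees $|S|\geq t^{\lfloor d/2\rfloor/2}\geq t^{(d-1)/4}$ (not $t^{d/2-o(1)}$), since $2t/d$ can be as small as $\sqrt{t}$; nevertheless $m\geq\frac{(d-1)\log_{|\field|}t}{4}\geq\frac{d\log_{|\field|}t}{5}$ for $d\geq 5$, so the stated constant still holds and the conclusion is unaffected.
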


Using the above claim, we prove the following lower bound on the size
of sets that are $d$-wise corank $d'$ for suitable $d,d'$.

\begin{claim}
	\label{claim_lbd_dwise}
	Consider a set $A\subseteq\field^r$ of cardinality $t$. There is an
	absolute constant $c_0$ such that the following holds. Let $A$ be 
	$d$-wise corank $d'$ for positive integers $d,d'$ with $c_0d'\leq d\leq
	2\sqrt{t}$. Then, $r\geq	\frac{d\log_{|\field|} t}{12d'}$ if
	$t,d,d'$ are large enough.
\end{claim}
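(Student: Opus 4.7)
The plan is to reduce Claim \ref{claim_lbd_dwise} to Claim \ref{claim_ABI} by extracting from $A$ a subset $A^* \subseteq A$ that is $D$-wise linearly independent for $D = \lfloor d/(c_1 d')\rfloor$ with a small absolute constant $c_1$, and whose cardinality satisfies $|A^*| \geq t^{\beta}$ for some $\beta$ close to $1$. Applying Claim \ref{claim_ABI} to $A^*$ then yields
\[
r \;\geq\; \frac{D \cdot \log_{|\field|} |A^*|}{5} \;\geq\; \frac{\beta}{5 c_1}\cdot\frac{d \log_{|\field|} t}{d'},
\]
and choosing $\beta$ and $c_1$ so that $\beta/(5 c_1) \geq 1/12$ (for instance $\beta = 5/6$ and $c_1 = 2$) gives the target bound. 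The hypothesis $c_0 d' \leq d$ is used to ensure $D$ is at least some fixed absolute constant, so that Claim \ref{claim_ABI} can meaningfully apply to $A^*$.

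To produce $A^*$ I would use the probabilistic method: include each vector of $A$ independently with probability $p = t^{\beta-1}$ in a random sample, then delete one representative from each linearly dependent $D$-subset surviving in the sample. The expected size of the resulting set is at least $pt - L_D p^D$, where $L_D$ is the number of linearly dependent $D$-subsets of $A$; a standard averaging argument yields a specific $A^*$ of size at least $pt/2$ provided $L_D p^{D-1} \leq 1/2$.

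The upper bound on $L_D$ should come from the $d$-wise corank $d'$ hypothesis via the code $C = \ker M \subseteq \field^t$, where $M$ is the $r\times t$ matrix whose columns are the vectors of $A$. The hypothesis translates to: for every $d$-subset $T$ of coordinates, the subspace of codewords of $C$ supported on $T$ has dimension at most $d'$. A double-counting argument over pairs (codeword $c$, $d$-superset of $\mathrm{supp}(c)$) yields $N_w \leq |\field|^{d'} \binom{t}{w}/\binom{d}{w}$ for the number $N_w$ of codewords of weight exactly $w \leq d$. Since every linearly dependent $D$-subset contains the support of some nonzero codeword of weight at most $D$, summing $N_w \binom{t-w}{D-w}$ over $w \leq D$ gives the required bound on $L_D$.

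The main obstacle I foresee is making the bound on $L_D$ strong enough to admit an extraction with $|A^*|$ of size $t^{\beta}$ for $\beta$ close to $1$: the direct codeword-count bound above only gives roughly $L_D = O(|\field|^{d'} \binom{t}{D})$, and one needs to exploit the gap $d \gg d'$ (available from $d \geq c_0 d'$) to tighten this. I expect the tight bound comes either from iterating the extraction in stages with progressively smaller wiseness parameters, or from a refined matroidal counting argument directly on the dependencies of $A$. Once the extraction succeeds, the remaining condition $D \leq 2\sqrt{|A^*|}$ required by Claim \ref{claim_ABI} follows from $d \leq 2\sqrt{t}$ together with the growth conditions on $t, d, d'$.
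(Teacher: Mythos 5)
Your proposal has a genuine gap, and you flag it yourself: you have no working bound on $L_D$, the number of linearly dependent $D$-subsets of $A$. The codeword-count bound you sketch gives roughly $L_D = O(|\field|^{d'}\binom{t}{D})$, and with that, the deletion step requires $p^{D-1}\cdot |\field|^{d'}\binom{t}{D} \leq 1/2$; plugging in $p = t^{\beta-1}$ forces $\beta$ down to about $1/D$, not a constant near $1$, which kills the final bound. The ``iterate in stages'' and ``refined matroidal counting'' escape hatches are left entirely unspecified, so there is no complete argument here. The whole probabilistic-deletion framework is straining against an obstacle that the hypothesis does not obviously overcome by counting.

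The paper's proof sidesteps counting dependent subsets altogether with a short deterministic greedy removal. Set $d'' = \lfloor d/2d'\rfloor$ and iteratively delete a dependent subset $B\subseteq A_i$ of size $d''$ whenever one exists, for at most $d'$ rounds. The crucial observation is a \emph{corank-decrement} lemma: if $A_i$ is $d_i$-wise corank $d_i'$ and $B\subseteq A_i$ is dependent of size $d''$, then $A_i\setminus B$ is $(d_i - d'')$-wise corank $d_i'-1$ (because for any $B'\subseteq A_i\setminus B$ of size $d_i-d''$, $\corank{B\cup B'}\leq d_i'$ but $\corank{B\cup B'}\geq \corank{B}+\corank{B'}\geq 1 + \corank{B'}$). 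So the corank budget $d'$ is spent down one unit per round; after $\leq d'$ rounds you either halt early with a $d''$-wise linearly independent set, or exhaust the budget and the survivor is $d/2$-wise linearly independent. Either way the survivor still has size $\geq t - d'd''\geq t/2$, so Claim~\ref{claim_ABI} applies directly with a logarithm that has barely shrunk. This is both more elementary and strictly stronger than what the probabilistic extraction could hope to give (size $\Omega(t)$ rather than $t^{\beta}$), and it requires no enumeration of dependencies at all. If you want to salvage a probabilistic-method version, the lever you are missing is exactly this decrement structure; without it, the counting bound you have is not tight enough.
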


\begin{proof}
	Denote by $d''$ the value $\lfloor d/2d'\rfloor$. We construct a
	sequence of sets $A_0,A_1,\ldots$ as follows: $A_0$ is the set $A$;
	for any $i\geq 0$, if $A_i$ has been constructed and is $d''$-wise
	linearly independent, we stop; otherwise, there is a $B\subseteq
	A_i$ of cardinality $d''$ that is not linearly independent -- in
	this case, we set $A_{i+1} = A_i\setminus B$; we stop at $i = d'$.
	It is easy to see that the cardinality $t_i$ of $A_i$ is $t-id''$.
	It can also be checked that if $A_i$ is $d_i$-wise corank $d_i'$,
	then $A_{i+1}$, if constructed, is $(d_i-d'')$-wise corank $d_i' - 1$;
	it therefore follows that the set $S_i$, if constructed, is
	$(d-id'')$-wise corank $d'-i$, for any $i\geq 0$ -- in particular,
	$S_{d'}$ is $d/2$-wise linearly independent.

	We base our analysis on when the above process stops. Let $i_0$ be
	the largest $i$ so that $A_i$ is constructed. Its size $t_{i_0}$ is
	at least $t-d'd''\geq t - d/2\geq t/2$ for large enough $t$. If
	$i_0 = d'$, then $A_{i_0}$ is a set of size at least $t/2$ that is
	$d/2$-wise linearly independent -- by Claim \ref{claim_ABI}, we get
	$r\geq \frac{d\log_{|\field|} t}{12}$ for large enough $d,t$.
	Otherwise, $i_0<d'$ and we must have $A_{i_0}$ is $d''$-wise
	linearly independent -- in this case, by Claim \ref{claim_ABI}, we
	get $r\geq \frac{d''\log_{|\field|} t}{5}\geq \frac{d\log_{|\field|}
	t}{12d'}$ if $c_0$ is large enough. Thus, in either case, our claim
	holds.
\end{proof}

Now, we apply the above lemma with $t =
|\field|^{\lceil\frac{20}{c_0}\sqrt{c\log N}\rceil}$ and $d = c_0\lceil\sqrt{c\log
N}\rceil$. We obtain the following corollary:

\begin{corollary}
	\label{coro_lbd_dwise}
	Let $t,d$ be as defined above. For large enough $N$, given any
	$A\subseteq\field^r$ of size $t$, there is a subset $B$ of $A$ of
	cardinality $d$ such that $\corank{B}\geq\ell$. 
\end{corollary}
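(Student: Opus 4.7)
\emph{Proof plan.} The plan is to apply Claim~\ref{claim_lbd_dwise} in its contrapositive form with $d' := \ell - 1$. If no $B \subseteq A$ of size $d$ has $\corank{B} \geq \ell$, then every $d$-subset $B \subseteq A$ satisfies $\corank{B} \leq \ell - 1$, i.e.\ $A$ is $d$-wise corank $(\ell-1)$, and Claim~\ref{claim_lbd_dwise} would then force a lower bound on $r$. The aim is to show that under the hypothesis $\ell \cdot r < c\log N$ (with $\ell \leq r$ as in Case~2) this lower bound is violated.

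First I would verify the side conditions of Claim~\ref{claim_lbd_dwise}. Since we are in Case~2 with $\ell \leq r$, we have $\ell^2 \leq \ell r < c\log N$, hence $\ell - 1 < \sqrt{c\log N} \leq \lceil\sqrt{c\log N}\rceil$, which gives $c_0 d' = c_0(\ell-1) \leq c_0\lceil\sqrt{c\log N}\rceil = d$. Moreover, $\sqrt{t} = |\field|^{\lceil 10\sqrt{c\log N}/c_0\rceil}$ dominates $d = c_0\lceil\sqrt{c\log N}\rceil$ for large $N$, so $d \leq 2\sqrt{t}$; and $d, d', t$ are all large when $N$ is large. Thus the hypotheses of Claim~\ref{claim_lbd_dwise} are met.

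Now suppose for contradiction that no such $B$ exists. Then $A$ is $d$-wise corank $(\ell-1)$, and Claim~\ref{claim_lbd_dwise} yields
\[
r \;\geq\; \frac{d\,\log_{|\field|} t}{12(\ell-1)} \;\geq\; \frac{c_0\sqrt{c\log N}\,\cdot\,(20\sqrt{c\log N}/c_0)}{12(\ell-1)} \;=\; \frac{5c\log N}{3(\ell-1)}.
\]
On the other hand, the hypothesis $\ell r < c\log N$ gives $r < c\log N/\ell$. Since $3(\ell-1) < 5\ell$, we have $c\log N/\ell < \tfrac{5c\log N}{3(\ell-1)}$, so the above inequality is contradicted. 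This establishes the existence of the required $B$.

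The main obstacle is simply the arithmetic bookkeeping: choosing the right $d'$ (namely $\ell-1$, since the claim promises $\corank{B} > d'$) and confirming that the constants in $t = |\field|^{\lceil 20\sqrt{c\log N}/c_0\rceil}$ and $d = c_0\lceil\sqrt{c\log N}\rceil$ were picked precisely so that $d\log_{|\field|} t \approx 20\,c\log N$ beats $12(\ell-1)r$ with slack. Edge cases such as $\ell = 1$ can be handled separately (any $B$ of size $d$ with $d>r$ is linearly dependent), but the generic case is the content of the argument above.
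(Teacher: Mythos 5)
Your proposal is essentially the same argument as the paper's, recast in contrapositive form: the paper takes $d'$ to be the (minimal) corank parameter of $A$ and shows $d' \geq \min\{d/c_0,\; d\log_{|\field|}t/(12r)\} \geq \ell$, while you fix $d' = \ell - 1$, assume for contradiction that $A$ is $d$-wise corank $(\ell-1)$, and show the claim's conclusion $r \geq d\log_{|\field|}t/(12(\ell-1))$ contradicts $\ell r < c\log N$. The verification of the side condition $c_0 d' \leq d \leq 2\sqrt{t}$ matches the paper's, and your arithmetic ($d\log_{|\field|}t \geq 20c\log N$, so $r \geq \tfrac{5}{3}\,c\log N/(\ell-1)$ which beats $r < c\log N/\ell$ since $3(\ell-1) < 5\ell$) is correct.

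The one flaw is your dismissal of the $\ell = 1$ edge case. You assert that any $B$ of size $d$ is linearly dependent because $d > r$, but this inequality need not hold: with $\ell = 1$ the constraint is only $r < c\log N$, while $d = c_0\lceil\sqrt{c\log N}\rceil$ is $\Theta(\sqrt{\log N})$, so for large $N$ one can have $r \gg d$. The correct way to handle $\ell = 1$ is to invoke Claim~\ref{claim_ABI} directly: if every $d$-subset of $A$ were linearly independent, then $A$ is $d$-wise linearly independent with $d \leq 2\sqrt{t}$, forcing $r \geq d\log_{|\field|}t/5 \geq 4c\log N$, contradicting $r < c\log N$. With that fix, the proposal is sound and mirrors the paper's proof. (Also note $\sqrt{t}$ is $|\field|^{\lceil 20\sqrt{c\log N}/c_0\rceil/2}$, not $|\field|^{\lceil 10\sqrt{c\log N}/c_0\rceil}$ --- the ceiling does not split --- but this is harmless for the bound $d \leq 2\sqrt{t}$.)
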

\begin{proof}
	Assume that $A$ is $d$-wise corank $d'$ for some $d'$. We will
	show that $d'\geq \ell$. For large enough $N$, by Claim
	\ref{claim_lbd_dwise}, we have $d'\geq
	\min\bra{\frac{d}{c_0},\frac{d\log_{|\field|}t }{12 r}}$. It remains
	to be shown that this quantity is at least $\ell$.

	Note that, since $\ell\leq r$, $\ell^2\leq \ell r\leq c\log N$.
	Hence, $\ell\leq\sqrt{c\log N}$. Thus, by the choice of $d$, we see
	that $d/c_0\geq \ell$. Moreover,
	\[
	\frac{d\log_{|\field|}t}{12r}\geq \frac{20c\log N}{12r} > \ell
	\]
	Hence, we see that $d'\geq\ell$.
\end{proof}

We now define the $(\ell,r)$-good collection of subspaces. For each
$S\subseteq [t]$ of cardinality $d$, and each $A\subseteq
\field^{d}$ of size $d-\ell$, let $V_{S,A}$ be the subspace
generated by $\setcond{x\in\field^N}{x|_S = u\text{ for some $u\in A$}}$. It
can be seen that $\mydim(V_{S,A})\leq N-d+d-\ell = N-\ell$ for each
$S,A$. 

Given any $A_1\subseteq\field^N$ of cardinality $r$, let
$P\in\field^{r\times N}$ be the matrix the rows of which are the
elements of $A_1$. Let $A_2$ denote the set of the first $t$ columns
of $P$. By Corollary \ref{coro_lbd_dwise}, there is a $B\subseteq A_2$
of size $d$ such that $\corank{B}\geq\ell$. Let $S\subseteq [t]$ index
the columns of $B$ in $P$. It can be seen that $A_1\subseteq V_{S,A'}$
for any $A'$ of size $d-\ell$ containing a set that spans
$\setcond{v|_S}{v\in A_1}$ (such an $A'$ exists since
$\corank{B}\geq\ell$). 

Thus, we can take for our collection $\mc{T}$ of $(\ell,r)$-good subspaces the
collection of all $V_{S,A}$, where $S\subseteq [t]$ with $|S|=d$, and $A\subseteq
\field^{d}$ of size $d-\ell$. The size of $\mc{T}$ is bounded
by $\binom{t}{d}\binom{|\field|^d}{d-\ell}\leq t^d|\field|^{d^2} = N^{O(c)}$, by our
choice of $d$ and $t$. Clearly, $\mc{T}$ can be constructed in time
$N^{O(c)}$.

\noindent\textbf{Acknowledgments.}~~We are grateful to Jaikumar
Radhakrishnan for discussions. We also thank the anonymous referee for
useful comments and suggestions.

\end{document}